\newtheorem{theorem}{\indent Theorem}
\newtheorem{lemma}[theorem]{\indent Lemma}
\newtheorem{corollary}[theorem]{\indent Corollary}
\newtheorem{proposition}[theorem]{\indent Proposition}
\newtheorem{definition}[theorem]{\indent Definition}
\newtheorem{example}{\indent Example}
\newtheorem{remark}{\indent Remark}
\title{\LARGE \bf
Sampled-data design for robust control of a single qubit }
\author{Daoyi~Dong, Ian R. Petersen and Herschel Rabitz
\thanks{This work was supported by the Australian Research Council. H.R. acknowledges support from the ARO.}
\thanks{D. Dong is with the School of
Engineering and Information Technology, University of New South
Wales at the Australian Defence Force Academy, Canberra, ACT 2600,
Australia
and the Institute of Cyber-Systems and Control, State Key Laboratory of Industrial Control Technology, Zhejiang University, Hangzhou 310027, China
        {\tt\small daoyidong@gmail.com}}%
\thanks{I. R.
Petersen is with the School of Engineering and Information
Technology, University of New South Wales at the Australian Defence
Force Academy, Canberra, ACT 2600, Australia
        {\tt\small i.r.petersen@gmail.com}}
        \thanks{H.
Rabitz is with the Department of Chemistry, Princeton University, Princeton, New Jersey 08544, USA
        {\tt\small hrabitz@princeton.edu}}%
}
\begin{document}

\maketitle
\thispagestyle{empty}
\pagestyle{empty}

\begin{abstract}

This paper presents a sampled-data approach for the robust control of a single qubit (quantum bit).
The required robustness is defined using a sliding
mode domain and the control law is designed offline and then utilized
online with a single qubit having bounded uncertainties. Two classes of uncertainties are considered involving the system Hamiltonian and the coupling strength of the system-environment interaction. Four cases are analyzed in detail including without decoherence, with amplitude damping decoherence, phase damping decoherence and depolarizing decoherence. Sampling periods are specifically designed for these cases to guarantee the required robustness. Two sufficient conditions are presented for guiding the design of unitary control for the cases without decoherence and with amplitude damping decoherence. The proposed approach has potential applications in quantum error-correction and in constructing robust quantum gates.

\end{abstract}

\begin{keywords}
Quantum control, qubit,  sampled-data design, sliding
mode control, robust decoherence control, open quantum system.
\end{keywords}

\newpage
\section*{Nomenclature}
Throughout this paper, we use the following notation:\\
$|\psi\rangle$ \ \ \ \ \ \ \ \ state vector (quantum pure state)\\
$a^{*}$ \ \ \ \ \ \ \ \ \ complex conjugate of $a$\\
$L^{T}$ \ \ \ \ \ \ \ \ \ transpose of $L$\\
$L^{\dagger}$ \ \ \ \ \ \ \ \ \ adjoint of $L$\\
$\text{tr}(A)$ \ \ \ \ \ \ trace of $A$\\
$\langle\psi|$ \ \ \ \ \ \ \ \ adjoint of $|\psi\rangle$\\
$\langle \phi | \psi\rangle$ \ \ \ \ \ inner product of
$|\phi\rangle$
and $|\psi\rangle$ \\
$\rho$ \ \ \ \ \ \ \ \ \ \ density operator\\
$\sigma_{x,y,z}$  \ \ \ \ \ \ Pauli matrices\\
$\omega(t)$ \ \  \  \ \ \ uncertainty amplitude in $\sigma_{z}$\\
$\epsilon_{x}(t)$ \ \ \ \ \ \ uncertainty amplitude in $\sigma_{x}$\\
$\epsilon_{y}(t)$ \  \ \ \ \ \ uncertainty amplitude in $\sigma_{y}$\\
$\mathbf{R}$ \ \ \ \ \ \ \ \ set of real numbers\\
$\gamma$   \ \ \ \ \ \ \ \ \ coupling strength\\
$\delta\gamma_{t}$   \ \ \ \ \ \ uncertainty in coupling strength\\
$\mathcal{D}_{c}$   \ \ \ \ \ \  sliding mode domain of closed systems\\
$\mathcal{D}_{a}$   \ \ \ \ \ \  sliding mode domain of quantum systems with amplitude damping decoherence\\
$\mathcal{D}_{p}$   \ \ \ \ \ \  sliding mode domain of quantum systems with phase damping decoherence\\
$\mathcal{D}_{d}$   \ \ \ \ \ \  sliding mode domain of quantum systems with depolarizing decoherence\\
$C$   \ \ \ \ \ \ \ \ coherence\\
$P$   \ \ \ \ \ \ \ \  purity\\
$p_{0}$   \ \ \ \ \ \ \ probability of failure\\

\section{INTRODUCTION}\label{Sec1}

Controlling quantum phenomena is becoming an important task in
different research areas such as quantum optics, physical chemistry
and quantum information \cite{Dong and Petersen 2010IET}-\cite{Nielsen and Chuang 2000}. The development of quantum control theory can provide
systematic methods and a theoretical framework for analyzing and
synthesizing quantum control problems. Several theoretical tools and design methods in classical control have been applied to the quantum domain. For example, Lie groups and Lie algebras have been used to establish controllability conditions for closed quantum systems \cite{D'Alessandro 2007}. Optimal control theory has been applied to control analysis and the design of several quantum control tasks such as population transfer with minimum energy or in the shortest time \cite{Khaneja et al 2001}-\cite{Boscain and Mason
2006}. Learning control has become a powerful tool for the direct laboratory discovery of laser pulses controlling a variety of atomic and molecular phenomena \cite{Rabitz et al 2000}. Feedback control has been utilized for the control of quantum entanglement, quantum error-correction and quantum state preparation \cite{Wiseman and Milburn 1993}-\cite{Mirrahimi and van Handel 2007}. The development of quantum control theory needs to consider the special characteristics of quantum systems (e.g., measurement collapse and non-commutative relationships) and the unique objectives of quantum control (e.g., entanglement generation and decoherence control) (For more discussion, see, e.g., \cite{Dong and Petersen 2010IET}).

Robust control is one of the most important research areas in classical control theory. Attaining robust control for quantum systems has been recognized as a key issue in the development of
practical quantum technology \cite{Pravia et al 2003}-\cite{Dong et
al IJC}, since many types of uncertainties
are unavoidable (including control noise, environmental disturbances, etc.) for most
practical quantum systems. Several methods have been proposed for
the robust control of quantum systems. For example, James \emph{et
al.} \cite{James et al 2007} formulated and solved a quantum
robust control problem using the $H^{\infty}$ method for linear
quantum stochastic systems. A risk-sensitive control problem has been solved for a sampled-data feedback
model of quantum systems \cite{James 2004}. Quantum robust control is still in its infancy, and it is necessary to develop new tools to deal with different types of uncertainties.

Dong and Petersen \cite{Dong and Petersen 2009NJP}-\cite{Dong and Petersen 2011IFAC} developed sliding mode control to
enhance the robustness of quantum systems. In particular, two
approaches based on sliding mode design \cite{Utkin 1977} have been
proposed for the control of quantum systems, and potential applications of sliding mode control
to quantum information processing have been presented \cite{Dong and
Petersen 2009NJP}. Sliding
mode control for two-level quantum systems was presented to deal with
bounded uncertainties in the system Hamiltonian
\cite{Dong and Petersen 2011Automatica}. This paper will
employ the concept of a sliding mode domain to define the required robustness and develop a new sampled-data design
approach \cite{Dong and Petersen 2011CDC}, \cite{Chen and Francis 1995} to enhance the performance of
a controlled quantum system with uncertainties in the Hamiltonian as well as in the system-environment interaction.

Sampled-data control has been widely applied in industrial electronics, process control and signal processing \cite{Chen and Francis 1995}. The sampled data are used to design controllers while the sampling (measurement) process is usually assumed not to affect the system's state. However, in quantum control, the sampling process unavoidably destroys the system's state according to the measurement collapse postulate (see, e.g., \cite{Nielsen and Chuang 2000}). Hence, measurement can be used as the means for information acquisition as well as a control tool. For example, several incoherent control schemes have been presented where measurements are used as a control tool to affect the system dynamics \cite{Vilela Mendes and Man'ko 2003}-\cite{Romano and D'Alessandro 2006}. A framework of quantum operations including unitary control and projective measurements has been developed to investigate feedback control of quantum systems \cite{Belavkin 1983}, \cite{Bouten et al 2009}. One well known example where measurement modifies the system dynamics is the quantum Zeno effect, which is the inhibition of transitions between quantum
states by frequent measurement of the state (see, e.g.,
\cite{Misra and Sudarshan} and \cite{Itano et al 1990}). However, it is usually a difficult task
to make frequent measurements with practical quantum systems. We may assume
that the smaller the measurement period is, then the bigger the cost of accomplishing
the periodic measurements becomes. Hence, in contrast to
the quantum Zeno effect, in this paper we will use the sampling (projective measurement) process as a control tool and design sampling periods as large as possible to guarantee the required robustness for several classes of quantum control tasks including control design for quantum systems with uncertainties in the system Hamiltonian and robust decoherence control of Markovian open quantum systems.

Decoherence occurs when a quantum system interacts with an uncontrollable environment \cite{Breuer and Petruccione 2002}. Decoherence has been recognized as a bottleneck for the development of practical quantum information technology \cite{Gordon et al 2008}. Various methods have been proposed for decoherence control including quantum error-avoiding codes \cite{Zanardi and Rasetti 1997}-\cite{Kwiat et al 2000}, quantum error-correction codes \cite{Knill et al 2000}, dynamical decoupling \cite{Viola et al 1999}, \cite{Khodjasteh et al 2010} and quantum feedback control \cite{Vitali et al 1997}. In quantum error-avoiding codes, quantum information is encoded in a decoherence free subspace which is inherently immune to decoherence due to specific symmetries in the system-environment interaction \cite{Protopopescu et al 2003}. Quantum error-correction codes are active methods to detect and counteract the effects of errors during quantum information processing via encoding redundant qubits. Dynamical decoupling of decoherence control is an open-loop control approach which often employs bang-bang control pulses to dynamically cancel the effect of decoherence. Quantum feedback and optimal control theory also provide powerful tools for the analysis and design of decoherence control \cite{ZhangJ et al 2005}, \cite{Cui et al 2008}. However, there are few results which consider robustness when uncertainties or inaccurate parameters exist in the system Hamiltonian or the system-environment interaction. Here we consider a robust decoherence control scheme for quantum systems subject to Markovian decoherence \cite{Nielsen and Chuang 2000}. In particular, we will focus on a single qubit subject to amplitude damping decoherence, phase damping decoherence and depolarizing decoherence \cite{Nielsen and Chuang 2000}. We propose a sampling-based design approach to guarantee the robustness of a single qubit system with uncertainties in the system Hamiltonian and the coupling strength of the system-environment interaction.

The paper is organized as follows. Section \ref{Sec2} presents the control problem formulation and defines the required robustness. In Section \ref{Sec3}, we present the main methods and results for robust control design. Section \ref{Sec4} gives the proofs of the main
results. Concluding remarks are given in Section \ref{Sec5}.

\section{Control problem formulation}\label{Sec2}

For an open quantum system, its state is described by the positive Hermitian density matrix (or density operator) $\rho$ satisfying $\text{tr}\rho=1$, and the evolution of $\rho$ cannot generally
be described in terms of a unitary transformation. In many
situations, a quantum master equation for $\rho(t)$ (or $\rho_{t}$) is a suitable
way to describe the dynamics of an open quantum system. One of the
simplest cases is when a Markovian approximation can be applied
under the assumption of a short environmental correlation time permitting the neglect of memory
effects \cite{Breuer and Petruccione 2002}. For an
$N$-dimensional open quantum system with Markovian dynamics, its
state $\rho(t)$ can be described by the following Markovian master
equation (for details, see, e.g., \cite{Breuer and Petruccione
2002}, \cite{Lindblad 1976}, \cite{Alicki and Lendi 2007}):
\begin{equation}\label{MME}
\dot{\rho}(t)=-i[H(t),\rho(t)]+\frac{1}{2}\sum_{j,k=0}^{N^{2}-1}\alpha_{jk}
\{[L_{j}\rho(t),L^{\dagger}_{k}]+[L_{j},\rho(t)L^{\dagger}_{k}]\}.
\end{equation}
Here for an arbitrary operator $X$, $[X, \rho]=X\rho-\rho X$ is the commutation operator, $\{L_{j}\}_{j=0}^{N^{2}-1}$ is a basis for the space of linear
bounded operators on the Hilbert space $\mathcal{H}$ with $L_{0}=I$, the coefficient
matrix $A=(\alpha_{jk})$ is positive semidefinite and physically
specifies the relevant relaxation rates and we have set $\hbar=1$ in this paper. Markovian master equations
have been widely used to model controlled quantum systems in quantum control \cite{Ticozzi and Viola 2008}-\cite{Ticozzi and Viola 2009}, especially for Markovian quantum feedback \cite{Wiseman and Milburn 2009}.

In this paper, we will focus on a two-level quantum system (a single qubit) with Markovian dynamics whose evolution can be described by the following Lindblad equation:
\begin{equation}
\dot{\rho}(t)=-i[H(t),\rho(t)]+\sum_{k=1}^{K}\gamma_{k} \mathfrak{D}[L_{k}]\rho(t),
\end{equation}
where $$\mathfrak{D}[L_{k}]\rho=L_{k}\rho L_{k}^{\dag}-\frac{1}{2}L_{k}^{\dag}L_{k}\rho-\frac{1}{2}\rho L_{k}^{\dag} L_{k}.$$
For such a single qubit system, we can divide $H(t)$ into three parts $H(t)=H_{0}+H_{\Delta}+H_{u}$, where the free Hamiltonian is $H_{0}=\frac{1}{2}\sigma_{z}$, the
control Hamiltonian is $H_{u}=\sum_{j=x,y,z} u_{j}(t)I_{j}$,
($u_{j}(t)\in \mathbf{R}$, $I_{j}=\frac{1}{2}\sigma_{j}$), and the uncertainties in the system Hamiltonian are
$H_{\Delta}=\omega(t)I_{z}+\epsilon_{x}(t)I_{x}+\epsilon_{y}(t)I_{y}$
($\omega(t), \epsilon_{x}(t), \epsilon_{y}(t)\in \mathbf{R}$).
The Pauli matrices $\sigma=(\sigma_{x},\sigma_{y},\sigma_{z})$ take the
following form:
\begin{equation}
\sigma_{x}=\begin{pmatrix}
  0 & 1  \\
  1 & 0  \\
\end{pmatrix} , \ \ \ \
\sigma_{y}=\begin{pmatrix}
  0 & -i  \\
  i & 0  \\
\end{pmatrix} , \ \ \ \
\sigma_{z}=\begin{pmatrix}
  1 & 0  \\
  0 & -1  \\
\end{pmatrix} .
\end{equation}
$H_{\Delta}$ is the first class of uncertainties we will consider in this paper.
The unitary errors in \cite{Pravia et al 2003} belong to this class
of uncertainties, and one-qubit gate errors also correspond to this class of uncertainties
\cite{Dong and Petersen 2009NJP}.
A second class of uncertainties are uncertainties $\delta\gamma_{k}$ residing in the coupling strength $\gamma_{k}$. Since the Lindblad equation is an approximate equation for the open quantum system coupling with its environment, this class of uncertainties may come from inaccurate modeling as well as time-varying coupling between the system and environment. We assume that all the uncertainties are bounded, i.e., $|\omega(t)|\leq \omega$, $\sqrt{\epsilon_{x}^{2}(t)+\epsilon_{y}^{2}}\leq \epsilon$ and $|\delta\gamma_{k}|\leq \gamma$, where constants $\omega\geq 0$, $\epsilon>0$ and $\gamma \geq 0$ are given.

For a qubit system, its state $\rho$ can be represented in terms of the
Bloch vector
$\mathbf{r}=(x,y,z)=(\text{tr}\{\rho\sigma_{x}\},\text{tr}\{\rho\sigma_{y}\},\text{tr}\{\rho\sigma_{z}\})$:
\begin{equation}\label{blochEq}
\rho=\frac{I+\mathbf{r}\cdot \sigma}{2} .
\end{equation}
After representing the state $\rho$ with the
Bloch vector, the pure states (i.e., with $\text{tr}(\rho^{2})=1$) for the qubit system lie on the surface of
the Bloch sphere and the mixed states (i.e., with $\text{tr}(\rho^{2})< 1$) occupy the interior of the Bloch sphere. The purity of $\rho$ is defined as $P=\text{tr}(\rho^{2})$. A pure state can also be represented by a unit vector $|\psi\rangle$ in a complex Hilbert space, where $\rho=|\psi\rangle \langle\psi|$, $\langle \psi|=(|\psi\rangle)^{\dagger}$ and the operation $X^{\dagger}$ refers to the adjoint of $X$. The fidelity of an arbitrary state $\rho$ in terms of $|\psi\rangle$ can be defined as $\langle\psi|\rho|\psi\rangle$. Thus, the fidelity between two pure states $|\psi\rangle$ and $|\phi\rangle$ reduces to $|\langle\psi|\phi\rangle|^{2}$. A projective measurement with $\sigma_{z}$ on the qubit in state $\rho$ will make the state collapse into $|0\rangle$ with probability $\langle 0|\rho|0\rangle$ or into $|1\rangle$ with probability $\langle 1|\rho|1\rangle$ (such a process is referred as the measurement collapse postulate), where $|0\rangle$ and $|1\rangle$ are the eigenstates of $\sigma_{z}$ with corresponding eigenvalues 1 and -1, respectively. Another useful quantity is the coherence which can be defined as $C=x^{2}+y^{2}$, where $x=\text{tr}(\rho\sigma_{x})$ and $y=\text{tr}(\rho\sigma_{y})$ (see, e.g., \cite{Lidar and Schneider 2005}, \cite{ZhangM et al 2007}, \cite{Zhang et al 2010}). A decoherence process due to the interaction of a quantum system with its environment may reduce its purity or coherence.

We will consider the following four cases in this paper:

A) No decoherence (i.e., $\gamma_{k}\equiv 0$). This case corresponds to a closed quantum system with a pure state $\rho_{t}$ satisfying the Schr\"{o}dinger equation
\begin{equation}\label{Case1Eq}
\dot{\rho}_t=-i[H(t), \rho_t].
\end{equation}

B) Amplitude damping decoherence. In this case, the population of the quantum system can change (e.g., through loss of energy by spontaneous emission). The evolution of $\rho_{t}$ can be described by the following equation:
 \begin{equation}\label{AD_masterEq}
\dot{\rho_t}=-i[H(t), \rho_t]+\gamma_{t}(\sigma_{-}\rho_t \sigma_{+}-\frac{1}{2}\sigma_{+}\sigma_{-}\rho_t-\frac{1}{2}\rho_{t}\sigma_{+}\sigma_{-})
\end{equation}
where $\sigma_{-}=\frac{1}{2}(\sigma_{x}-i\sigma_{y})$, $\sigma_{+}=\frac{1}{2}(\sigma_{x}+i\sigma_{y})$ $\gamma_{t}=\gamma_{0}+\delta \gamma_{t}$ and $|\delta \gamma_{t}|\leq \gamma$. We also assume that $\gamma_{0}\geq \gamma$, which guarantees the coupling strength $\gamma_{t}\geq 0$.

C) Phase damping decoherence. In this case, a loss of quantum coherence can occur without loss of energy in the quantum system. The evolution of the state may be described by the following equation:
\begin{equation}\label{PD_masterEq}
\dot{\rho_{t}}=-i[H(t), \rho_{t}]+\gamma_{t}(\sigma_{z}\rho_{t} \sigma_{z}-\rho_{t}).
\end{equation}

D) Depolarizing decoherence. This decoherence maps pure states into mixed states. The dynamics can be described by the following equation:
\begin{equation}\label{DD_masterEq}
\dot{\rho_{t}}=-i[H(t), \rho_{t}]+\gamma_{t}(\sigma_{x}\rho_{t} \sigma_{x}-\rho_{t})+\gamma_{t}(\sigma_{y}\rho_{t} \sigma_{y}-\rho_{t})+\gamma_{t}(\sigma_{z}\rho_{t} \sigma_{z}-\rho_{t}).
\end{equation}

The objective of this paper is to design control laws for single qubit systems guaranteeing required robustness with the two classes of uncertainties. The required robustness for the four cases above is defined using the concept of a sliding mode domain, respectively, as follows.

\begin{definition}\cite{Dong and Petersen 2011Automatica}\label{definition1}
The sliding mode domain for a single qubit system without decoherence (closed systems) is defined as $\mathcal{D}_{c}=\{|\psi\rangle :  |\langle 0|\psi\rangle|^{2}\geq
1-p_{0}, 0< p_{0}< 1\}$.
\end{definition}

\begin{definition}\label{definition2}
The sliding mode domain for an open qubit system with amplitude damping decoherence is defined as $\mathcal{D}_{a}=\{\rho :  \langle 0|\rho|0\rangle\geq
1-p_{0}, 0< p_{0}< 1\}$.
\end{definition}

\begin{definition}\label{definition3}
The sliding mode domain for an open qubit system with phase damping decoherence is defined as $\mathcal{D}_{p}=\{\rho :  x^{2}+y^{2}\geq
\bar{C}, x=\text{tr}(\rho\sigma_{x}), y=\text{tr}(\rho\sigma_{y}), 0< \bar{C}\leq 1\}$.
\end{definition}

\begin{definition}\label{definition4}
The sliding mode domain for an open qubit system with depolarizing decoherence is defined as $\mathcal{D}_{d}=\{\rho :  \text{tr}\rho^{2} \geq
\bar{P}, 0.5< \bar{P}\leq 1\}$.
\end{definition}

\begin{remark}
The definition of $\mathcal{D}_{c}$ implies that the system's
state has a probability of at most $p_{0}$ (which we call the
probability of failure) to collapse out of $\mathcal{D}_{c}$ when making
a projective measurement with the operator $\sigma_{z}$. We aim
to drive and then maintain a single qubit's state in the
sliding mode domain $\mathcal{D}_{c}$. However, the uncertainties
$H_{\Delta}$ may take the system's state away from $\mathcal{D}_{c}$.
The sampling process (a measurement operation) unavoidably makes
the sampled system's state change. Thus, we expect that the control law
will guarantee that the system's state remains in $\mathcal{D}_{c}$,
except that the sampling process may take it away from $\mathcal{D}_{c}$
with a small probability (not greater than $p_{0}$). The definition of $\mathcal{D}_{a}$
has a similar meaning to $\mathcal{D}_{c}$. The difference lies in the fact that the quantum state
in Definition \ref{definition2} could be a mixed state $\rho$ and the system is also subject to amplitude damping decoherence. From Definition \ref{definition3}, we know
all states in $\mathcal{D}_{p}$ have coherence of at least $\bar{C}$. Definition \ref{definition4} defines $\mathcal{D}_{d}$ as a set where the purity of an arbitrary
quantum state is not less than $\bar{P}$.
\end{remark}

\section{Main Methods and Results}\label{Sec3}
In this paper, we propose a sampled-data design method for robust control of quantum systems with uncertainties. A key task is to design a sampling period as large as possible to guarantee the required robustness defined using a sliding mode domain. The sampling process is taken as an important control tool to modify the system dynamics nonunitarily. For the cases without decoherence and with amplitude damping decoherence, it is also necessary to design a control law to drive the system's state back to the corresponding sliding mode domain when the sampling process makes the system's state collapse out of the sliding mode domain. Such a control law corresponds to a unitary transformation and we refer to it as ``unitary control" in this paper. The sequel will provide the main methods and results for the four cases of uncertain quantum systems and then present some illustrative examples.

\subsection{No decoherence}
The objective is to develop a control strategy to guarantee the  required robustness when
bounded uncertainties exist in the system Hamiltonian. According to Definition 1, we specify the required
robustness as follows: (a) maintain the system's state in
the sliding mode domain $\mathcal{D}_{c}$ in which the system's state has
a high fidelity ($\geq 1-p_{0}$) with the sliding mode state $|0\rangle$, and (b)
once the system's state collapses out of $\mathcal{D}_{c}$ upon making a
measurement (sampling), drive it back to $\mathcal{D}_{c}$ within a
short time period $\beta T_{c}$ and maintain the state in
$\mathcal{D}_{c}$ for the following time period $(1-\beta) T_{c}$ (where $0\leq
\beta< 1$ and $T_{c}$ is the sampling period). $\beta$ is used to characterize the proportion of time that the unitary control is applied within the corresponding sampling period. Generally we choose
$\beta$ to satisfy $\frac{\beta}{1-\beta}\ll 1$, and this assumption will be helpful for designing the unitary control, which is demonstrated in the examples. To guarantee the
required robustness, we design a control law based on sampled-data
measurements as follows: For any sampling time $nT_{c}$ ($n=0, 1, 2, \dots$), (i) if
the measurement data corresponds to $|0\rangle$, let the system evolve
with zero control and sample again at the time $(n+1)T_{c}$; (ii)
otherwise, apply a unitary control to drive
the system's state back into a subset $\mathcal{E}_{c}$ of $\mathcal{D}_{c}$ from the time
$nT_{c}$ to $(n+\beta)T_{c}$, then sample again at time $(n+1)T_{c}$. The
control operation is switched between (i) and (ii) based on the
sampled data measurements. In (ii), to guarantee the desired goal
when $t\in [(n+1-\beta)T_{c}, (n+1)T_{c}]$, the unitary control should drive the system's state into $\mathcal{E}_{c}\subset\mathcal{D}_{c}$. $\mathcal{E}_{c}$ can be defined as
$\mathcal{E}_{c}=\{|\psi\rangle :  |\langle 0|\psi\rangle|^{2}\geq
1-\alpha p_{0}, 0< p_{0}< 1, 0\leq \alpha\leq 1 \}$.  The sampling period $T_{c}$ and the unitary control can
be designed offline in advance. The basic method we use is illustrated in Fig. \ref{scheme1}. The sequel will outline the design of the sampling period and establish a relationship between $\alpha$ and $\beta$ to guarantee the required robustness.

\begin{figure}
\centering
\includegraphics[width=5.6in]{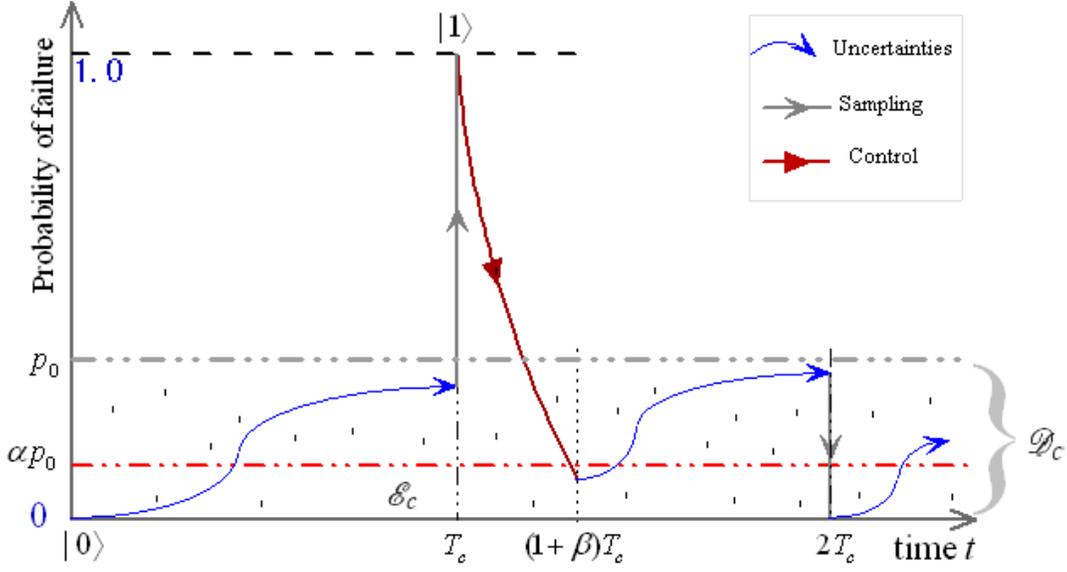}
\caption{The proposed sampled-data control scheme for a single qubit
system without decoherence. The labels ``Control",
``Sampling" and ``Uncertainties" refer to  the evolution process under the unitary control, the sampling process,
and uncertainties in the system Hamiltonian, respectively.}
\label{scheme1}
\end{figure}

Using a similar argument to Theorem 1 in \cite{Dong and Petersen
2011Automatica}, we have the following result \cite{Dong and Petersen 2011IFAC}.
\begin{lemma}\label{lemma1}
For a single qubit with initial state
$|\psi(0)\rangle=|0\rangle$ at time $t=0$, the system evolves to
$|\psi(t)\rangle$ under the action of
$H(t)=[1+\omega(t)]I_{z}+\epsilon_{x}(t) I_{x}+\epsilon_{y}(t)
I_{y}$ (where $|\omega(t)|\leq \omega$, $\omega\geq 0$,
$\sqrt{\epsilon_{x}^{2}(t)+\epsilon_{y}^{2}(t)}\leq \epsilon$ and
$\epsilon >0$). If $t\in [0, T_{c}]$, where
\begin{equation}\label{period}
T_{c}=\frac{\arccos(1-2p_{0})}{\epsilon},
\end{equation}
the state will remain in $\mathcal{D}_{c}=\{|\psi\rangle :
|\langle 0 |\psi\rangle|^{2}\geq 1-p_{0}\}$ (where $0< p_{0}< 1$).
When a projective measurement is made with the
operator $\sigma_{z}$ at time $t$, the probability of failure
$p=|\langle 1|\psi(t)\rangle|^{2}$ is not greater than $p_{0}$.
\end{lemma}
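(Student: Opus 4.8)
The plan is to track the failure probability $p(t)=|\langle 1|\psi(t)\rangle|^{2}$ directly and show it cannot grow faster than a known scalar comparison function, then invert the threshold $p(t)=p_{0}$ to recover $T_{c}$. Equivalently, and more geometrically, I would work with the Bloch vector $\mathbf{r}(t)=(x,y,z)$ of the pure state $|\psi(t)\rangle$: the condition $|\langle 0|\psi\rangle|^{2}\geq 1-p_{0}$ reads $z\geq 1-2p_{0}$ (since $\langle 0|\rho|0\rangle=(1+z)/2$), and the failure probability is $p=(1-z)/2$. Writing $H(t)=\tfrac12\mathbf{b}(t)\cdot\sigma$ with $\mathbf{b}(t)=(\epsilon_{x}(t),\epsilon_{y}(t),1+\omega(t))$, the Schr\"{o}dinger equation becomes the real precession law $\dot{\mathbf{r}}=\mathbf{b}\times\mathbf{r}$, while $|\psi(0)\rangle=|0\rangle$ gives $\mathbf{r}(0)=(0,0,1)$, i.e. $z(0)=1$, $p(0)=0$.

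The key step --- and the one explaining why only $\epsilon$, not $\omega$, enters $T_{c}$ --- is to extract the evolution of $z$ alone. From the cross-product law, $\dot z=(\mathbf{b}\times\mathbf{r})_{z}=\epsilon_{x}y-\epsilon_{y}x$, so the entire $z$-component $1+\omega$ of the field (the free Hamiltonian together with the longitudinal uncertainty) drops out: precession about the $z$-axis leaves $z$ invariant, and only the transverse field can pull the state off the north pole. Applying the Cauchy--Schwarz inequality, using purity $x^{2}+y^{2}+z^{2}=1$ and the bound $\sqrt{\epsilon_{x}^{2}+\epsilon_{y}^{2}}\leq\epsilon$, yields the scalar differential inequality $|\dot z|\leq \epsilon\sqrt{x^{2}+y^{2}}=\epsilon\sqrt{1-z^{2}}$.

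From here I would integrate via the substitution $z=\cos\phi$, $\phi\in[0,\pi]$, under which the inequality becomes $|\dot\phi|\leq\epsilon$ with $\phi(0)=0$; hence $\phi(t)\leq\epsilon t$, and since $\cos$ is decreasing on $[0,\pi]$, $z(t)\geq\cos(\epsilon t)$ for $\epsilon t\leq\pi$. Requiring $z(t)\geq 1-2p_{0}$ then amounts to $\cos(\epsilon t)\geq 1-2p_{0}$, i.e. $\epsilon t\leq\arccos(1-2p_{0})$, which is exactly $t\leq T_{c}$; equivalently $p(t)\leq \tfrac{1-\cos(\epsilon t)}{2}=\sin^{2}(\epsilon t/2)\leq p_{0}$. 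This simultaneously establishes membership in $\mathcal{D}_{c}$ and the measurement bound $p=|\langle 1|\psi(t)\rangle|^{2}\leq p_{0}$.

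The main obstacle is making the comparison step rigorous, since the comparison field $\sqrt{1-z^{2}}$ fails to be Lipschitz at the initial value $z=1$, so a naive monotone-comparison theorem does not apply verbatim; the bound $z(t)\geq\cos(\epsilon t)$ should be obtained as the minimal solution of $\dot z=-\epsilon\sqrt{1-z^{2}}$, or, as above, by passing to the $\arccos$ variable, verifying that $\phi$ is absolutely continuous with $|\dot\phi|\le\epsilon$ away from the degenerate points, and controlling the endpoints by continuity. It is worth recording that the estimate is tight: it is saturated in the worst case $\omega\equiv 0$ with the transverse field of constant magnitude $\epsilon$ held orthogonal to the in-plane part of $\mathbf{r}$, which is precisely why $T_{c}$ depends only on $\epsilon$ and $p_{0}$.
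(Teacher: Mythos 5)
Your proof is correct, and it takes a genuinely different and substantially more elementary route than the paper. The paper does not prove Lemma~1 in situ---it appeals to ``a similar argument to Theorem 1'' of the cited Automatica paper---but that argument's machinery is visible in the proof of Theorem~2 here: $\omega(t)$ is treated as an adversarial control, Pontryagin's minimum principle gives a bang-bang worst case (with an explicit exclusion of singular arcs), the constant-field Bloch equations are solved in closed form, and a continuity/supremum contradiction argument propagates the comparison $z^{A}_{t}\geq z^{B}_{t}=\cos(\epsilon t)$ along the trajectory, on the restricted interval $t\in[0,\tfrac{\pi}{2\sqrt{4+\epsilon^{2}}}]$. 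You collapse all of this into the single pointwise estimate $\dot z=\epsilon_{x}(t)y-\epsilon_{y}(t)x$, hence $|\dot z|\leq\sqrt{\epsilon_{x}^{2}+\epsilon_{y}^{2}}\,\sqrt{x^{2}+y^{2}}\leq\epsilon\sqrt{1-z^{2}}$, which makes it structurally transparent that the longitudinal field $1+\omega(t)$ cannot move $z$ at all (no bound on $\omega$ is even used), requires no identification of a worst-case control, and is valid on the full range $\epsilon t\leq\pi$, which covers $[0,T_{c}]$ since $\arccos(1-2p_{0})<\pi$ for $0<p_{0}<1$. You also correctly isolate the one real technical point---$\sqrt{1-z^{2}}$ fails to be Lipschitz exactly at the initial value $z=1$, so the constant solution $z\equiv 1$ and $\cos(\epsilon t)$ both solve the comparison ODE---and your fix (take $\cos(\epsilon t)$ as the \emph{minimal} solution of $\dot z=-\epsilon\sqrt{1-z^{2}}$, or run the $\phi=\arccos z$ substitution with a supremum/continuity argument at the degenerate endpoint) is exactly what is needed; minimal-solution comparison holds for continuous right-hand sides without Lipschitz continuity. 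Your tightness claim is also right, and the cleanest way to see it is to pass to the frame rotating at rate $1+\omega(t)$ about the $z$-axis, where the longitudinal field vanishes and a transverse field of magnitude $\epsilon$ kept orthogonal to the in-plane part of $\mathbf{r}$ saturates $z(t)=\cos(\epsilon t)$, so $T_{c}$ cannot be enlarged. What the paper's heavier optimal-control route buys in exchange is the explicit worst-case structure and closed-form trajectories that are reused in the proof of Theorem~\ref{CD_Theorem2}; but note your inequality generalizes there too: from $z_{0}=\cos\theta_{0}$ it gives $z_{t}\geq\cos(\theta_{0}+\epsilon t)$ directly (for $\theta_{0}+\epsilon t\leq\pi$), which is precisely the bound $z^{B}_{t}$ that the paper extracts by the Pontryagin argument.
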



We use $T_{c}$ defined in (\ref{period}) as the sampling period
to guarantee the required performance. If the sampling data
corresponds to $|1\rangle$, a unitary control is required to drive the
state back to a subset $\mathcal{E}_{c}$ of $\mathcal{D}_{c}$. The following theorem
gives a sufficient condition on the relationships between $\alpha$,
$p_{0}$ and $\beta$ to guarantee the required robustness.
\begin{theorem}\label{CD_Theorem2}
For a single qubit with initial state
$|\psi(0)\rangle$ satisfying $|\langle \psi(0)|1\rangle|^{2}$ $\leq
\alpha p_{0}$ ($0\leq \alpha \leq 1$) at time $t=0$, the system
evolves to $|\psi(t)\rangle$ under the action of
$H(t)=[1+\omega(t)]I_{z}+\epsilon_{x}(t) I_{x}+\epsilon_{y}(t)
I_{y}$ (where $\sqrt{\epsilon_{x}^{2}(t)+\epsilon_{y}^{2}(t)}\leq
\epsilon$, $\epsilon >0$, $|\omega(t)|\leq \omega$ and $\omega\geq 0$). If $t\in [0, (1-\beta) T_{c}]$ and
\begin{equation}\label{unitary1}
\alpha \leq
\frac{1-\cos[\beta\arccos(1-2p_{0})]}{2p_{0}}
\end{equation}
 where $0\leq\beta < 1$ and
\begin{equation}\label{2period}
T_{c}=\frac{\arccos(1-2p_{0})}{\epsilon},
\end{equation}
then the state will remain in $\mathcal{D}_{c}=\{|\psi\rangle :
|\langle 0 |\psi\rangle|^{2}\geq 1-p_{0}\}$ (where $0< p_{0}<1$).
When a projective measurement is made with the
operator $\sigma_{z}$ at time $t$, the probability of failure
$p=|\langle 1|\psi(t)\rangle|^{2}$ is not greater than $p_{0}$.
\end{theorem}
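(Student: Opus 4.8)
The plan is to reduce the claim to a one-dimensional differential inequality for the failure probability, exploiting the Bloch-sphere description set up in~(\ref{blochEq}). First I would write $p(t)=|\langle 1|\psi(t)\rangle|^{2}=\langle 1|\rho_{t}|1\rangle=\frac{1-z(t)}{2}$, where $z(t)=\text{tr}(\rho_{t}\sigma_{z})$ is the third Bloch coordinate and $|1\rangle$ is the $\sigma_{z}$-eigenstate with eigenvalue $-1$. Membership in $\mathcal{D}_{c}$ is then exactly $z(t)\ge 1-2p_{0}$, so it suffices to bound how fast $z$ can decrease under the uncertain Hamiltonian.

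Next I would translate the Schr\"odinger dynamics~(\ref{Case1Eq}) into the Bloch equation $\dot{\mathbf{r}}=\mathbf{h}(t)\times\mathbf{r}$, with $\mathbf{h}(t)=(\epsilon_{x}(t),\epsilon_{y}(t),1+\omega(t))$ read off from $H(t)$. Only the $z$-component matters: $\dot z=\epsilon_{x}(t)\,y-\epsilon_{y}(t)\,x$. Applying Cauchy--Schwarz together with the purity identity $x^{2}+y^{2}+z^{2}=1$ (valid since the closed-system evolution keeps $|\psi(t)\rangle$ pure) gives $|\dot z|\le\sqrt{\epsilon_{x}^{2}+\epsilon_{y}^{2}}\,\sqrt{x^{2}+y^{2}}\le\epsilon\sqrt{1-z^{2}}$. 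Note that the $I_{z}$ uncertainty $\omega(t)$ drops out entirely---it only drives azimuthal rotation and cannot tilt the state toward the equator---which is precisely why $\omega$ does not appear in the sampling period~(\ref{2period}).

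I would then substitute $z=\cos\theta$ with $\theta\in[0,\pi]$ the polar angle measured from $|0\rangle$, so that $\sqrt{1-z^{2}}=\sin\theta$ and the bound collapses to $|\dot\theta|\le\epsilon$: in the worst case the state drifts toward the equator at angular speed at most $\epsilon$. Writing $\theta_{0}:=\arccos(1-2p_{0})$, the boundary of $\mathcal{D}_{c}$ is $\theta=\theta_{0}$ and $T_{c}=\theta_{0}/\epsilon$. The hypothesis $|\langle\psi(0)|1\rangle|^{2}\le\alpha p_{0}$ gives $z(0)\ge 1-2\alpha p_{0}$, hence $\theta(0)\le\arccos(1-2\alpha p_{0})$; condition~(\ref{unitary1}), rearranged as $1-2\alpha p_{0}\ge\cos(\beta\theta_{0})$, is exactly the statement $\theta(0)\le\beta\theta_{0}$.

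Integrating over $t\in[0,(1-\beta)T_{c}]$ then yields $\theta(t)\le\theta(0)+\epsilon t\le\beta\theta_{0}+\epsilon(1-\beta)T_{c}=\beta\theta_{0}+(1-\beta)\theta_{0}=\theta_{0}$, so $z(t)\ge\cos\theta_{0}=1-2p_{0}$ and $p(t)\le p_{0}$ throughout, which is the desired conclusion. The one genuine subtlety---and the main obstacle---is the worst-case bookkeeping: establishing $|\dot\theta|\le\epsilon$ rigorously (including the monotonicity/sign argument that the extremal trajectory is the one whose $\theta$ increases at the maximal admissible rate) and verifying that $\theta$ remains in $[0,\pi]$, where $z=\cos\theta$ is invertible, so that the Gr\"onwall-type integration of $|\dot\theta|\le\epsilon$ legitimately controls $\theta(t)$. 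Everything else is the same geometric argument as Lemma~\ref{lemma1}, merely shifted by the nonzero initial angle $\beta\theta_{0}$ and run for the shorter horizon $(1-\beta)T_{c}$.
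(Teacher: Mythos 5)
Your proposal is correct, and it reaches the conclusion by a genuinely different and more elementary route than the paper. The paper's proof runs through two optimal-control lemmas: Lemma~\ref{lemmaB} invokes Pontryagin's minimum principle, treating $\omega(t)$ as a control, to show via a bang-bang analysis (including the exclusion of singular arcs) that the detuning term $[1+\omega(t)]I_{z}$ can only increase $z_{t}$ relative to the purely transverse Hamiltonian; Lemma~\ref{lemma2} establishes azimuthal independence; and a further sup-time contradiction argument propagates the comparison $z^{A}_{t}\geq z^{B}_{t}=\cos(\theta_{0}+\epsilon t)$ from the pole to an arbitrary initial latitude. Your single Cauchy--Schwarz estimate $|\dot z|\leq\sqrt{\epsilon_{x}^{2}+\epsilon_{y}^{2}}\,\sqrt{x^{2}+y^{2}}\leq\epsilon\sqrt{1-z^{2}}$ replaces all of this machinery and delivers the identical worst-case envelope $z(t)\geq\cos(\theta(0)+\epsilon t)$ in a few lines, while making structurally transparent why $\omega(t)$ is absent from $T_{c}$ (the $z$-row of the Bloch generator does not contain it). Your bound is tight, since $H=\epsilon I_{x}$ achieves equality --- the same extremal the paper identifies constructively; what the paper's longer route buys is precisely that constructive identification of the worst uncertainty realization, which it reuses in the sliding-mode narrative, whereas your comparison argument only certifies the bound. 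The one technical point you correctly flag --- integrating $|\dot\theta|\leq\epsilon$ through the poles, where $\sqrt{1-z^{2}}$ fails to be Lipschitz --- is handled by a standard one-sided comparison lemma for continuous right-hand sides (taking the minimal solution $\cos(\theta(0)+\epsilon t)$ of $\dot w=-\epsilon\sqrt{1-w^{2}}$), so it is a real but routine obstacle, not a gap. Finally, your bookkeeping $\theta(0)\leq\beta\theta_{0}$, $\epsilon(1-\beta)T_{c}=(1-\beta)\theta_{0}$ is exactly consistent with condition (\ref{unitary1}) as stated; note that the paper's own proof evaluates at time $\beta T_{c}$ and ends with $(1-\beta)$ inside the cosine, an apparent swap of the roles of $\beta$ and $1-\beta$ relative to the theorem statement, so your version is the internally consistent one.
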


\begin{remark}
Using Lemma \ref{lemma1} and Theorem \ref{CD_Theorem2}, we
aim to maintain the state in $\mathcal{D}_{c}$ by
implementing periodic sampling with period $T_{c}$ in
(\ref{period}). This theorem provides a sufficient condition to
guarantee the required robustness. Given $p_{0}$, $\beta$, we can
select $\alpha$ satisfying (\ref{unitary1}) in Theorem
\ref{CD_Theorem2}. If the sampled result is $|1\rangle$, we apply a
unitary control to drive the state into $\mathcal{E}_{c}$.
The sampling period and the unitary control can be designed
in advance. Different approaches can be used to design such a unitary control law. In this paper, we will employ a Lyapunov method \cite{Mirrahimi et al 2005}-\cite{Kuang and Cong 2008} in Example 2 to accomplish this task for the closed quantum system.
\end{remark}

\begin{remark}
The design
scheme above involves a sampling process and a unitary control. It is similar to the approach used in \cite{Dong and Petersen 2011Automatica}. The difference lies in the fact that the scheme in this paper involves a fixed sampling period $T_{c}$. However, the approach in \cite{Dong and Petersen 2011Automatica} involves at
least two measurement periods $T$ (equivalent to $T_{c}$ in this paper) and $T_{1}$ ($T_{1}\ll T$). This situation
means that the approach of \cite{Dong and Petersen 2011Automatica}
may require measurements which are very close together, which may
be difficult to achieve in practice. In this sense, the sampled-data
design in this paper is more practical than the
method in \cite{Dong and Petersen 2011Automatica}.
\end{remark}

\subsection{Amplitude damping decoherence}
For single qubit systems with amplitude damping decoherence, if the initial state is excited state $|0\rangle$, the decoherence will drive this excited state to the ground state $|1\rangle$. The objective is to design a control law to guarantee the required robustness defined by $\mathcal{D}_{a}$. We use a similar sampled-data design method to that in the case without decoherence. That is, if the state is $|0\rangle$ at $t=nT_{a}$ ($n=0, 1, 2, \dots$), we design a sampling period to maintain the system's state in $\mathcal{D}_{a}$ by implementing periodic sampling with period $T_{a}$; if the measurement makes the state collapse into $|1\rangle$ (with a probability $p\leq p_{0}$), we design a unitary control to drive the state back into a subset $\mathcal{E}_{a}$ of $\mathcal{D}_{a}$ from $t=nT_{a}$ to $(n+\beta)T_{a}$, and then sample again at $t=(n+1)T_{a}$. In order to determine the required sampling period, we have the following results.

\begin{theorem}\label{AC_Thereom1}
For a single qubit with initial state
$|0\rangle$ at time $t=0$, the system
evolves to $\rho(t)$ subject to (\ref{AD_masterEq}) where
$H(t)=[1+\omega(t)]I_{z}+\epsilon_{x}(t) I_{x}+\epsilon_{y}(t)
I_{y}$ ($\sqrt{\epsilon_{x}^{2}(t)+\epsilon_{y}^{2}(t)}\leq
\epsilon$, $\epsilon >0$, $|\omega(t)|\leq \omega$ and $\omega\geq 0$) and the coupling strength of amplitude damping decoherence is $\gamma_{t}=\gamma_{0}+\delta\gamma_{t}$ ($|\delta\gamma_{t}|\leq \gamma$). If $t\in [0, T_a]$ with
\begin{equation}\label{2period}
T_a=\frac{2p_{0}}{\sqrt{4\epsilon^{2}+(\gamma_{0}+\gamma)^{2}}+(\gamma_{0}+\gamma)},
\end{equation}
the state will remain in $\mathcal{D}_{a}=\{\rho :
\langle 0 |\rho|0\rangle\geq 1-p_{0}\}$ (where $0< p_{0}<1$).
When a projective measurement is made with the
operator $\sigma_{z}$ at time $t$, the probability of failure
$p=\langle 1|\rho|1\rangle$ is not greater than $p_{0}$.
\end{theorem}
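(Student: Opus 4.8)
The plan is to reduce the whole statement to a single scalar inequality for the excited-state population $p(t):=\langle 1|\rho(t)|1\rangle$. By Definition~\ref{definition2} and $\langle 0|\rho|0\rangle+\langle 1|\rho|1\rangle=1$, the state lies in $\mathcal{D}_{a}$ exactly when $p\le p_{0}$, and the failure probability on measuring $\sigma_{z}$ is precisely $p$; so both conclusions are the single claim $p(t)\le p_{0}$ on $[0,T_{a}]$. The initial state $|0\rangle$ gives $p(0)=0$. Writing the Bloch vector $\mathbf{r}=(x,y,z)$ with $z=\mathrm{tr}(\rho\sigma_{z})$ one has $p=\tfrac{1-z}{2}$, so it is equivalent to bound the decay of $z$ away from its initial value $z(0)=1$.

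First I would derive the scalar equation of motion for $p$ from (\ref{AD_masterEq}). The commutator $-i[H,\rho]$ rotates $\mathbf{r}$ about the axis $(\epsilon_{x},\epsilon_{y},1+\omega)$ and contributes $-\tfrac12(\epsilon_{x}y-\epsilon_{y}x)$ to $\dot p$ (the $z$-part of the field only precesses the state azimuthally and drops out), while the amplitude-damping dissipator pumps population from $|0\rangle$ into $|1\rangle$ and contributes $\gamma_{t}(1-p)$. Thus
\begin{equation}
\dot p=-\tfrac12(\epsilon_{x}y-\epsilon_{y}x)+\gamma_{t}(1-p).
\end{equation}
Now I bound the right-hand side uniformly. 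Cauchy--Schwarz with $\sqrt{\epsilon_{x}^{2}+\epsilon_{y}^{2}}\le\epsilon$ gives $|\epsilon_{x}y-\epsilon_{y}x|\le\epsilon\sqrt{x^{2}+y^{2}}$, the Bloch-ball constraint gives $x^{2}+y^{2}\le 1-z^{2}=4p(1-p)$, and $\gamma_{t}\le\gamma_{0}+\gamma$; hence $\dot p\le \epsilon\sqrt{x^{2}+y^{2}}+(\gamma_{0}+\gamma)(1-p)$. The key step is to maximize this over all reachable states: setting $z=\cos\theta$ on the extremal (maximal-purity) boundary turns the bound into $\epsilon\sin\theta+\tfrac{\gamma_{0}+\gamma}{2}(1+\cos\theta)$, and collapsing the $\sin\theta,\cos\theta$ combination into a single sinusoid yields the constant
\begin{equation}
M=\tfrac{\gamma_{0}+\gamma}{2}+\tfrac12\sqrt{4\epsilon^{2}+(\gamma_{0}+\gamma)^{2}}.
\end{equation}
This is exactly where the stated closed form emerges, since $T_{a}=p_{0}/M$ reproduces the formula in the statement.

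The final step is a comparison argument. Because $M$ bounds $\dot p$ at \emph{every} state in the Bloch ball, not merely at a fixed value of $p$, integrating $\dot p\le M$ from $p(0)=0$ gives $p(t)\le Mt$, so $p(t)\le p_{0}$ whenever $t\le p_{0}/M=T_{a}$, which is the assertion. I expect the main obstacle to be the maximization that produces $M$: one must correctly assemble the competing contributions --- the unitary term is largest when the coherence $x^{2}+y^{2}$ is large (state far from the pole), whereas the damping term is largest when $p$ is small (state near the pole) --- and show that the worst-case combined rate is the stated constant. A secondary technical point is rigor in the comparison step: since $p$ need not be monotone, one should invoke the differential inequality with the \emph{uniform} bound $M$ (valid over the whole reachable set) rather than arguing along a single trajectory, after which the conclusion $p(t)\le p_{0}$ on $[0,T_{a}]$ is immediate.
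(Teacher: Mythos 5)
Your proposal is correct and takes essentially the same route as the paper: both reduce the claim to a scalar differential inequality for the population (your $p=\tfrac{1-z}{2}$ versus the paper's $z_t$), bound the rate uniformly over the Bloch ball by the constant $M=\tfrac12\bigl(\sqrt{4\epsilon^{2}+(\gamma_{0}+\gamma)^{2}}+(\gamma_{0}+\gamma)\bigr)$ --- the paper by maximizing $f(z)=2\epsilon\sqrt{1-z^{2}}+(\gamma_{0}+\gamma)(1+z)$ via calculus, you by collapsing $\epsilon\sin\theta+\tfrac{\gamma_{0}+\gamma}{2}\cos\theta$ into a single sinusoid --- and then integrate the uniform bound linearly from $p(0)=0$. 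One remark: in passing from your exact equation $\dot p=-\tfrac12(\epsilon_{x}y-\epsilon_{y}x)+\gamma_{t}(1-p)$ to ``$\dot p\le\epsilon\sqrt{x^{2}+y^{2}}+(\gamma_{0}+\gamma)(1-p)$'' you silently doubled the sharp Cauchy--Schwarz constant $\tfrac{\epsilon}{2}$; this is still a valid (weaker) bound and is precisely the same conservative factor the paper builds in with its estimate $|\epsilon_{x}y-\epsilon_{y}x|\le 2\epsilon\sqrt{1-z^{2}}$, which is why your $M$ lands exactly on the stated $T_{a}$ (with the sharp constant the conclusion would hold a fortiori on $[0,T_{a}]$).
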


\begin{corollary}\label{AC_improved_proposition1}
If $p_{0}\leq \frac{1}{2}-\frac{\gamma_{0}+\gamma}{2\sqrt{4\epsilon^{2}+(\gamma_{0}+\gamma)^{2}}}$ and $t\in [0, T_a']$, the sampling period $T_{a}$ in (\ref{2period}) can be replaced by $T_{a}'$ to guarantee the same robustness as in Theorem \ref{AC_Thereom1}, where
\begin{equation}\label{3period}
T_{a}'=\frac{2p_{0}}{4\epsilon\sqrt{p_{0}-p_{0}^{2}}+2(\gamma_{0}+\gamma)(1-p_{0})}.
\end{equation}
\end{corollary}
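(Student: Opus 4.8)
The plan is to reuse the scalar differential inequality for the failure probability that drives the proof of Theorem~\ref{AC_Thereom1}, and then to integrate it more sharply. Write $p(t)=\langle 1|\rho(t)|1\rangle$ and abbreviate $g=\gamma_0+\gamma$. Since $\langle 0|\rho|0\rangle+\langle 1|\rho|1\rangle=1$, the condition $\rho(t)\in\mathcal{D}_a$ is exactly $p(t)\le p_0$, and the initial state $|0\rangle$ gives $p(0)=0$. The proof of Theorem~\ref{AC_Thereom1} (decomposing (\ref{AD_masterEq}) into its Hamiltonian and dissipative parts, bounding the Hamiltonian term via Cauchy--Schwarz with $\sqrt{\epsilon_x^2+\epsilon_y^2}\le\epsilon$, using $x^2+y^2\le 1-z^2=4p(1-p)$, and $\gamma_t\le g$) yields
\[
\dot p \le f(p), \qquad f(p):=2\epsilon\sqrt{p-p^2}+g(1-p), \qquad p(0)=0 .
\]
Theorem~\ref{AC_Thereom1} bounds $f$ by its global maximum over $p$, which produces the period $T_a$; the corollary instead exploits that, under its hypothesis, $f$ is increasing on $[0,p_0]$, so that its maximum there sits at the endpoint $p_0$.

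The second step is a monotonicity analysis of $f$ on $(0,\tfrac12)$. Differentiating,
\[
f'(p)=\frac{\epsilon(1-2p)}{\sqrt{p-p^2}}-g ,
\]
so $f'(p)=0$ reads $\epsilon(1-2p)=g\sqrt{p-p^2}$; squaring and setting $u=p(1-p)$ gives $\epsilon^2(1-4u)=g^2u$, i.e.\ $u=\epsilon^2/(4\epsilon^2+g^2)$, whose unique root in $(0,\tfrac12)$ is
\[
p^{*}=\frac12-\frac{g}{2\sqrt{4\epsilon^2+g^2}} .
\]
Because $u=p(1-p)$ is increasing on $[0,\tfrac12]$, we get $f'(p)>0$ for $0<p<p^{*}$, so $f$ is strictly increasing on $[0,p^{*}]$. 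The hypothesis $p_0\le\frac12-\frac{\gamma_0+\gamma}{2\sqrt{4\epsilon^2+(\gamma_0+\gamma)^2}}$ is precisely $p_0\le p^{*}$, hence $\max_{p\in[0,p_0]}f(p)=f(p_0)=2\epsilon\sqrt{p_0-p_0^2}+g(1-p_0)$.

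Finally I would integrate by a first-exit argument. Let $t_0=\inf\{t\ge 0:p(t)=p_0\}$. On $[0,t_0]$ one has $p(t)\le p_0\le p^{*}$, so by monotonicity $\dot p(t)\le f(p(t))\le f(p_0)$; integrating from $p(0)=0$ gives $p_0=p(t_0)\le f(p_0)\,t_0$, i.e.\ $t_0\ge p_0/f(p_0)=T_a'$ with $T_a'$ as in (\ref{3period}). Hence $p(t)\le p_0$, that is $\rho(t)\in\mathcal{D}_a$, for every $t\in[0,T_a']$, and the failure probability $\langle 1|\rho(t)|1\rangle\le p_0$ at any sampling instant in this window, which is the same robustness guaranteed by Theorem~\ref{AC_Thereom1}. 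Since $f(p_0)\le f(p^{*})$ when $p_0\le p^{*}$, one also gets $T_a'\ge T_a$, confirming that the new period is a genuine (larger) improvement.

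The step I expect to be the main obstacle is the monotonicity analysis: one must check that the interior critical point of $f$ equals exactly the threshold appearing in the hypothesis, and that it is the only critical point in $(0,\tfrac12)$, so that ``increasing on $[0,p^{*}]$'' is unambiguous and the endpoint really is the maximizer on $[0,p_0]$. Once this is settled, the first-exit integration and the comparison $T_a'\ge T_a$ are routine.
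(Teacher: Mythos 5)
Your proof is correct and follows essentially the same route as the paper: you reuse the differential inequality from the proof of Theorem~\ref{AC_Thereom1}, observe that under the hypothesis $p_{0}\leq p^{*}$ the rate function attains its maximum over the relevant range at the endpoint, and integrate via a first-exit argument to get $T_{a}'=p_{0}/f(p_{0})$ --- your $p=(1-z)/2$ coordinates are merely a change of variables from the paper's Bloch form $f(z)=2\epsilon\sqrt{1-z^{2}}+(\gamma_{0}+\gamma)(1+z)$, and your explicit $f'$ computation fleshes out the endpoint-maximum claim the paper leaves as ``easy to verify.'' One cosmetic note: the coefficient $2\epsilon\sqrt{p-p^{2}}$ in your bound actually stems from the paper's looser estimate $|\epsilon_{x}(t)y_{t}-\epsilon_{y}(t)x_{t}|\leq 2\epsilon\sqrt{1-z_{t}^{2}}$ rather than genuine Cauchy--Schwarz (which would give $\epsilon\sqrt{1-z_{t}^{2}}$ and an even larger admissible period); since your inequality is weaker than Cauchy--Schwarz it is certainly valid and matches (\ref{3period}), so this is an attribution slip, not a gap.
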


When there exist no uncertainties in the system Hamiltonian (i.e., $H_{\Delta}\equiv 0$), the sampling period can be designed using the following proposition.
\begin{proposition}\label{AC_proposition2}
For a single qubit with initial state
$|0\rangle$ at time $t=0$, the system
evolves to $\rho(t)$ subject to (\ref{AD_masterEq}) where
$H(t)=I_{z}$ and the coupling strength of amplitude damping decoherence is $\gamma_{t}=\gamma_{0}+\delta\gamma_{t}$ ($|\delta\gamma_{t}|\leq \gamma$). If $t\in [0, T_a'']$ with
\begin{equation}\label{4period}
T_a''=-\frac{\ln (1-p_{0})}{\gamma_{0}+\gamma},
\end{equation}
the state will remain in $\mathcal{D}_{a}=\{\rho :
\langle 0 |\rho|0\rangle\geq 1-p_{0}\}$ (where $0< p_{0}<1$).
When a projective measurement is made with the
operator $\sigma_{z}$ at time $t$, the probability of failure
$p=\langle 1|\rho|1\rangle$ is not greater than $p_{0}$.
\end{proposition}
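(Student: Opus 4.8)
The plan is to reduce the statement to a single scalar quantity, namely the population $\rho_{00}(t)=\langle 0|\rho(t)|0\rangle$ of the sliding-mode state $|0\rangle$. Both membership in $\mathcal{D}_{a}$ and the failure probability $p=\langle 1|\rho|1\rangle=1-\rho_{00}(t)$ are determined by this one number, so it suffices to obtain a sharp lower bound on $\rho_{00}(t)$ over $[0,T_{a}'']$. First I would take the $(0,0)$ matrix element of the master equation (\ref{AD_masterEq}) with $H(t)=I_{z}=\tfrac{1}{2}\sigma_{z}$ and $\gamma_{t}=\gamma_{0}+\delta\gamma_{t}$.

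The key step is to show that this matrix element obeys a closed, decoupled ODE. Since $H=\tfrac{1}{2}\sigma_{z}$ is diagonal in the $\{|0\rangle,|1\rangle\}$ basis, the commutator contributes nothing: $\langle 0|[H,\rho]|0\rangle=0$. For the dissipator I would use the identities $\sigma_{+}|0\rangle=0$ and $\sigma_{+}\sigma_{-}=|0\rangle\langle 0|$, which collapse the gain term $\langle 0|\sigma_{-}\rho\sigma_{+}|0\rangle$ to zero and reduce the combined loss terms to $-\gamma_{t}\rho_{00}$. This yields
\begin{equation}
\dot{\rho}_{00}(t)=-\gamma_{t}\,\rho_{00}(t),
\end{equation}
independent of the coherences. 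Integrating with the initial condition $\rho_{00}(0)=1$ gives $\rho_{00}(t)=\exp\!\big(-\!\int_{0}^{t}\gamma_{s}\,ds\big)$, hence $p=1-\exp(-\int_{0}^{t}\gamma_{s}\,ds)$.

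Next I would invoke the uncertainty bound. Because $\gamma_{0}\geq\gamma$ and $|\delta\gamma_{t}|\leq\gamma$, we have $0\le\gamma_{s}\le\gamma_{0}+\gamma$, so $\rho_{00}$ is nonincreasing and $\int_{0}^{t}\gamma_{s}\,ds\leq(\gamma_{0}+\gamma)t$. Monotonicity is what lets me check only the right endpoint: $\rho_{00}(s)\ge\rho_{00}(t)$ for all $s\in[0,t]$, so guaranteeing the constraint at $t$ guarantees it throughout. The worst case is the constant maximal coupling $\gamma_{s}\equiv\gamma_{0}+\gamma$, for which $p=1-e^{-(\gamma_{0}+\gamma)t}$. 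Imposing $p\le p_{0}$ and solving for $t$ gives exactly $t\le -\ln(1-p_{0})/(\gamma_{0}+\gamma)=T_{a}''$, whence $\rho_{00}(t)\ge 1-p_{0}$ and $\rho(t)\in\mathcal{D}_{a}$ for all $t\in[0,T_{a}'']$.

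This argument is short and largely computational, so there is no serious obstacle. The only points requiring care are the bookkeeping of the Lindblad (lowering) convention --- confirming that with the paper's sign choice $|0\rangle$ is the decaying excited level and $\sigma_{-}$ maps $|0\rangle\mapsto|1\rangle$ --- and the use of $\gamma_{0}\ge\gamma$ to secure $\gamma_{t}\ge 0$, which both keeps $\rho_{00}$ a genuine decaying population and justifies the endpoint-only verification via monotonicity.
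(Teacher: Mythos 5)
Your proof is correct and follows essentially the same route as the paper: the paper works with the Bloch component $z_{t}=\text{tr}(\rho_{t}\sigma_{z})$, obtaining $\dot{z}_{t}=-(\gamma_{0}+\delta\gamma_{t})(1+z_{t})\geq-(\gamma_{0}+\gamma)(1+z_{t})$ and hence $z_{t}\geq 2e^{-(\gamma_{0}+\gamma)t}-1$, which via $\rho_{00}=(1+z_{t})/2$ is exactly your closed scalar equation $\dot{\rho}_{00}=-\gamma_{t}\rho_{00}$ with the worst case being constant maximal coupling $\gamma_{s}\equiv\gamma_{0}+\gamma$. Your exact integration $\rho_{00}(t)=\exp\bigl(-\int_{0}^{t}\gamma_{s}\,ds\bigr)$ versus the paper's differential inequality is a cosmetic difference, and your monotonicity remark is harmless but unneeded, since the pointwise bound $\rho_{00}(t)\geq e^{-(\gamma_{0}+\gamma)t}\geq 1-p_{0}$ already holds for every $t\in[0,T_{a}'']$.
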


\begin{remark}
From the proof of Proposition \ref{AC_proposition2}, it is clear that $T_a''=-\frac{\ln (1-p_{0})}{\gamma_{0}+\gamma}$ exactly corresponds to the case $\delta\gamma_{t}\equiv \gamma$ when $H_{\Delta}\equiv 0$. In this sense, the sampling period $T_a''=-\frac{\ln (1-p_{0})}{\gamma_{0}+\gamma}$ is optimal to guarantee the required robustness. From the proofs of Theorem \ref{AC_Thereom1} and Corollary \ref{AC_improved_proposition1}, it is clear that $T_{a}'\geq T_{a}$. The relationship $T_{a}''\geq T_{a}$ for arbitrary $p_{0}$ can be proved by the following steps: (a) Define $F(p_{0})=T_{a}''-T_{a}$; (b) observe $F(p_{0}=0)=0$; and (c) verify $\frac{\text{d}F}{\text{d}p_{0}}\geq 0$.
\end{remark}

Hence, for different situations we may use $T_{a}$, $T_{a}'$ or $T_{a}''$ as the sampling period
to guarantee the required performance. If the sampled data
corresponds to $|1\rangle$, a unitary control is required to drive the
state back to a subset $\mathcal{E}_{a}$ of $\mathcal{D}_{a}$. The subset $\mathcal{E}_{a}$ may be defined as
$\mathcal{E}_{a}=\{\rho :  \langle 0|\rho|0 \rangle\geq
1-\alpha p_{0}, 0< p_{0}< 1, 0\leq \alpha\leq 1 \}$. The following theorem
gives a sufficient condition on the relationships between $\alpha$,
$p_{0}$ and $\beta$ to guarantee the required robustness (The following conclusion is also true when $T_{a}$ can be replaced by $T_{a}'$ or $T_{a}''$).
\begin{theorem}\label{AD_Theorem2}
For a single qubit with initial state
$\rho(0)$ satisfying $\langle 1|\rho(0)|1\rangle$ $\leq
\alpha p_{0}$ ($0\leq \alpha \leq 1$) at time $t=0$, the system
evolves to $\rho(t)$ subject to (\ref{AD_masterEq}) where
$H(t)=[1+\omega(t)]I_{z}+\epsilon_{x}(t) I_{x}+\epsilon_{y}(t)
I_{y}$ ($\sqrt{\epsilon_{x}^{2}(t)+\epsilon_{y}^{2}(t)}\leq
\epsilon$, $\epsilon >0$, $|\omega(t)|\leq \omega$ and $\omega\geq 0$) and the coupling strength of amplitude damping decoherence is $\gamma_{t}=\gamma_{0}+\delta\gamma_{t}$ ($|\delta\gamma_{t}|\leq \gamma$). If $t\in [0, (1-\beta) T_{a}]$ and
\begin{equation}\label{sufficientCondition}
\alpha \leq
\beta
\end{equation}
where $0<\beta \leq 1$ and
\begin{equation}\label{AD2period}
T_{a}=\frac{2p_{0}}{\sqrt{4\epsilon^{2}+(\gamma_{0}+\gamma)^{2}}+(\gamma_{0}+\gamma)},
\end{equation}
the state will remain in $\mathcal{D}_{a}=\{\rho :
\langle 0|\rho|0\rangle\geq 1-p_{0}\}$ (where $0< p_{0}<1$).
When a projective measurement is made with the
operator $\sigma_{z}$ at time $t$, the probability of failure
$p=\langle 1|\rho|1\rangle$ is not greater than $p_{0}$.
\end{theorem}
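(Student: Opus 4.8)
The plan is to reduce everything to the single scalar $p(t)=\langle 1|\rho(t)|1\rangle$, the population of the failure state. Since $\langle 0|\rho|0\rangle=1-p$, the containment $\rho(t)\in\mathcal{D}_a$ is equivalent to $p(t)\le p_0$, and $p$ is precisely the probability of failure under a $\sigma_z$ measurement; so it suffices to prove $p(t)\le p_0$ throughout $[0,(1-\beta)T_a]$. The only difference from Theorem \ref{AC_Thereom1} is that the hypothesis $\langle 1|\rho(0)|1\rangle\le\alpha p_0$ starts the trajectory at $p(0)\le\alpha p_0$ rather than at $p(0)=0$, and the goal is to show that this initial displacement toward the boundary costs exactly the fraction $\alpha$ of the available time.

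First I would derive a differential inequality for $p$ from (\ref{AD_masterEq}). In Bloch coordinates $(x,y,z)$ with $p=(1-z)/2$, the coherent part $-i[H,\rho]$ contributes $\dot p_{\mathrm{coh}}=\tfrac12(\epsilon_y x-\epsilon_x y)$ and the amplitude-damping dissipator contributes $\dot p_{\mathrm{AD}}=\gamma_t(1-p)$. Bounding the coherent term by a multiple of $\sqrt{p(1-p)}$ via the Bloch-ball constraint $x^2+y^2\le 1-z^2=4p(1-p)$, and using $\gamma_t\le\gamma_0+\gamma$, one obtains $\dot p\le f(p)$ with $f(p)=2\epsilon\sqrt{p(1-p)}+(\gamma_0+\gamma)(1-p)$. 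The key fact, which is already the content of the proof of Theorem \ref{AC_Thereom1}, is that $f$ attains its maximum over $p\in[0,1]$ at
\[
R=\frac{\sqrt{4\epsilon^2+(\gamma_0+\gamma)^2}+(\gamma_0+\gamma)}{2}=\frac{p_0}{T_a},
\]
so that $\dot p(t)\le R$ holds \emph{pointwise, independently of the current value of $p$ and hence of the initial state}. This observation is what allows Theorem \ref{AC_Thereom1} to be recycled for a shifted start.

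Given the uniform rate bound, the rest is a one-line integration: $p(t)\le p(0)+Rt\le\alpha p_0+Rt$. Demanding the right-hand side stay at most $p_0$ yields $Rt\le(1-\alpha)p_0$, i.e. $t\le(1-\alpha)p_0/R=(1-\alpha)T_a$. The hypothesis $\alpha\le\beta$ then gives $(1-\alpha)T_a\ge(1-\beta)T_a$, so $p(t)\le p_0$ holds on the whole of $[0,(1-\beta)T_a]$; the assertion on the failure probability is the same inequality.

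The hard part will not be the final integration but justifying that the growth-rate bound $R$ is genuinely independent of the initial condition. Unlike the closed case, where the state is pure and $|\rho_{01}|=\sqrt{p(1-p)}$ exactly, here $\rho(0)$ may be mixed, so the off-diagonal coherence driving the coherent growth of $p$ is a priori an independent degree of freedom. What saves the argument is the universal density-matrix inequality $|\rho_{01}|^2\le\rho_{00}\rho_{11}=p(1-p)$ (equivalently $x^2+y^2\le 1-z^2$), which bounds the coherent contribution by a function of $p$ alone and makes the rate estimate valid for every admissible state. Once this is in place, the maximization $\max_{p}f(p)=R=p_0/T_a$ is routine, and the sufficient condition $\alpha\le\beta$ emerges as exactly the linear trade-off between the initial displacement $\alpha p_0$ and the forfeited control interval $\beta T_a$.
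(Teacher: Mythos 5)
Your proposal is correct and is essentially the paper's own proof: the paper likewise invokes the state-independent bound $\dot{z}_{t}\geq-\bigl[\sqrt{4\epsilon^{2}+(\gamma_{0}+\gamma)^{2}}+(\gamma_{0}+\gamma)\bigr]$ from the proof of Theorem \ref{AC_Thereom1}, integrates linearly from $z_{0}\geq1-2\alpha p_{0}$ over $[0,(1-\beta)T_{a}]$ to get $z_{t}\geq1-2(1+\alpha-\beta)p_{0}$, and concludes from $\alpha\leq\beta$ — exactly your computation rewritten in $z=1-2p$ coordinates. Your explicit remark that the Bloch-ball inequality $x^{2}+y^{2}\leq1-z^{2}$ (equivalently $|\rho_{01}|^{2}\leq\rho_{00}\rho_{11}$) keeps the rate bound valid for mixed initial states is a worthwhile point that the paper leaves implicit, but it does not change the route.
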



\subsection{Phase damping decoherence}
For a single qubit with phase damping decoherence, we define the coherence as $C=x^{2}+y^{2}$ where $x=\text{tr}(\rho\sigma_{x})$ and $y=\text{tr}(\rho\sigma_{y})$. The phase damping decoherence will reduce the coherence of the system. The objective is to guarantee that the state has coherence not less than $\bar{C}$ by periodic sampling when there exist uncertainties in the coupling strength of system-environment interaction and in the system Hamiltonian. To determine the required sampling period, we have the following results.
\begin{theorem}\label{PD_Theorem1}
For a single qubit with initial state
$\rho_{0}$ satisfying $C_{0}=[\text{tr}(\rho_{0}\sigma_{x})]^{2}+[\text{tr}(\rho_{0}\sigma_{y})]^{2}=1$ at time $t=0$, the system
evolves to $\rho_{t}$ subject to (\ref{PD_masterEq}) where
$H(t)=[1+\omega(t)]I_{z}+\epsilon_{x}(t) I_{x}+\epsilon_{y}(t)
I_{y}$ ($\sqrt{\epsilon_{x}^{2}(t)+\epsilon_{y}^{2}(t)}\leq
\epsilon$, $\epsilon >0$, $|\omega(t)|\leq \omega$ and $\omega\geq 0$) and the coupling strength of the phase damping decoherence is $\gamma_{t}=\gamma_{0}+\delta\gamma_{t}$ ($|\delta\gamma_{t}|\leq \gamma$). If $t\in [0, T_{p}]$ with
\begin{equation}\label{TP}
T_{p}=\begin{cases}\frac{1-\bar{C}}{4\sqrt{2}(\gamma_{0}+\gamma)}, \  \ \ \text{when} \ 4(\gamma_{0}+\gamma)^{2}\geq \epsilon^{2};\\
\frac{(1-\bar{C})\sqrt{\epsilon^{2}-2(\gamma_{0}+\gamma)^{2}}}{2\epsilon^{2}}, \  \ \ \text{when} \ 4(\gamma_{0}+\gamma)^{2}< \epsilon^{2},
\end{cases}
\end{equation}
the state will remain in $\mathcal{D}_{p}=\{\rho_{t} :
[\text{tr}(\rho_{t}\sigma_{x})]^{2}+[\text{tr}(\rho_{t}\sigma_{y})]^{2}\geq \bar{C}, 0< \bar{C}\leq1 \}$.
When a periodic projective measurement is made with the
operator $\sigma_{x}$ on the system, the sampling (measurement) period $T_{p}$ can guarantee that the state remains in $\mathcal{D}_{p}$.
\end{theorem}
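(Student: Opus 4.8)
The plan is to reduce the matrix master equation (\ref{PD_masterEq}) to a closed system of real ODEs for the Bloch coordinates, and then to derive a scalar differential inequality for the coherence $C=x^{2}+y^{2}$ whose worst case over the admissible uncertainties can be integrated in closed form. First I would use the Bloch representation (\ref{blochEq}) with $\mathbf{r}=(x,y,z)$. Since $I_{j}=\frac{1}{2}\sigma_{j}$, the Hamiltonian $H(t)=\frac{1}{2}[(1+\omega)\sigma_{z}+\epsilon_{x}\sigma_{x}+\epsilon_{y}\sigma_{y}]$ generates a precession $\dot{\mathbf{r}}=\mathbf{b}\times\mathbf{r}$ about $\mathbf{b}=(\epsilon_{x},\epsilon_{y},1+\omega)$, while the dephasing term satisfies $\sigma_{z}\rho\sigma_{z}-\rho=-(x\sigma_{x}+y\sigma_{y})$ and therefore damps the transverse coordinates at rate $2\gamma_{t}$. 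Writing out the components gives
\begin{align}
\dot{x}&=\epsilon_{y}z-(1+\omega)y-2\gamma_{t}x,\\
\dot{y}&=(1+\omega)x-\epsilon_{x}z-2\gamma_{t}y,\\
\dot{z}&=\epsilon_{x}y-\epsilon_{y}x.
\end{align}

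Next I would compute $\dot{C}=2x\dot{x}+2y\dot{y}$. The free-precession contributions $(1+\omega)xy$ cancel, leaving
\begin{equation}
\dot{C}=2z(\epsilon_{y}x-\epsilon_{x}y)-4\gamma_{t}C,
\end{equation}
so that only the transverse part of the Hamiltonian uncertainty feeds $z$ back into the coherence. To bound the worst-case loss of coherence I would apply the Cauchy--Schwarz inequality $|\epsilon_{y}x-\epsilon_{x}y|\leq\epsilon\sqrt{C}$, use the admissible bound $\gamma_{t}\leq\gamma_{0}+\gamma$, and control $|z|$ either through the Bloch-ball constraint $z^{2}\leq 1-C$ or by integrating $\dot{z}$, since $|\dot{z}|\leq\epsilon\sqrt{C}\leq\epsilon$ together with $z(0)=0$ yields $|z(t)|\leq\epsilon t$. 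This produces a differential inequality $-\dot{C}\leq\Phi$, in which $\Phi$ bounds the fastest possible decay rate of $C$ under all admissible uncertainties and coupling-strength errors.

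Finally I would maximize $\Phi$ over the feasible configuration. I expect the maximizer to lie in the interior of the admissible $z$-range when the transverse uncertainty dominates the coupling strength, and on the dephasing-dominated boundary otherwise; this dichotomy is precisely what produces the threshold $4(\gamma_{0}+\gamma)^{2}\geq\epsilon^{2}$ versus $4(\gamma_{0}+\gamma)^{2}<\epsilon^{2}$ and the two constants appearing in (\ref{TP}). Bounding $-\dot{C}$ by the resulting worst-case rate and integrating from $C(0)=C_{0}=1$ then gives the largest time $T_{p}$ for which $C(t)\geq\bar{C}$ is guaranteed, i.e.\ $\rho_{t}\in\mathcal{D}_{p}$ throughout $[0,T_{p}]$. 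Because a projective measurement with $\sigma_{x}$ collapses the state onto a $\sigma_{x}$-eigenstate with $C=1$, periodic sampling with period $T_{p}$ restores $C_{0}=1$ at every sampling instant and hence maintains the system in $\mathcal{D}_{p}$.

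I expect the main obstacle to be the passage from the coherence equation to the integrable bound. The coordinate $z$ starts at $0$ and grows, so the decay of $C$ is driven by a coupling that is itself dynamically generated; one must simultaneously take the worst case over the direction of $(\epsilon_{x},\epsilon_{y})$ and over $\delta\gamma_{t}$, yet keep $\Phi$ simple enough to integrate to the clean two-branch expression (\ref{TP}). Pinning down exactly where the maximizing tilt transitions from an interior critical point to the boundary---and thereby recovering the correct threshold together with the factor $\sqrt{\epsilon^{2}-2(\gamma_{0}+\gamma)^{2}}$---is the delicate step, whereas the remaining separation-of-variables integration of the scalar bound is routine.
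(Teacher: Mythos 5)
Your reduction is exactly the paper's: the same Bloch equations (transverse damping at rate $2\gamma_{t}$, $\dot{z}=\epsilon_{x}y-\epsilon_{y}x$), the same coherence equation $\dot{C}=-4\gamma_{t}C+2z(\epsilon_{y}x-\epsilon_{x}y)$, the static Bloch-ball bound $z^{2}\leq 1-C$, a worst-case decay rate integrated linearly from $C_{0}=1$, and the observation that a $\sigma_{x}$ measurement resets $C=1$ at each sampling instant. (Your alternative dynamic bound $|z|\leq\epsilon t$ is unnecessary; the paper uses only $z^{2}\leq 1-C$.)

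The one place your sketch would not deliver what it promises is the final maximization. With your (correct, and in fact tighter) Cauchy--Schwarz step $|\epsilon_{y}x-\epsilon_{x}y|\leq\epsilon\sqrt{C}$ you get $-\dot{C}\leq\Phi(C):=4(\gamma_{0}+\gamma)C+2\epsilon\sqrt{C(1-C)}$, and substituting $C=\frac{1+\cos\phi}{2}$ shows $\max_{C\in[0,1]}\Phi=2(\gamma_{0}+\gamma)+\sqrt{4(\gamma_{0}+\gamma)^{2}+\epsilon^{2}}$, attained at an interior point for \emph{every} parameter choice---so there is no interior/boundary dichotomy, and you will not recover the threshold $4(\gamma_{0}+\gamma)^{2}\gtrless\epsilon^{2}$ or the constants of (\ref{TP}) as you claim. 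The paper's two branches arise from two deliberate relaxations you skipped: it bounds $|\epsilon_{y}x-\epsilon_{x}y|\leq\epsilon(|x|+|y|)\leq\epsilon\sqrt{2C}$ and then applies $a+b\leq\sqrt{2(a^{2}+b^{2})}$ to obtain $-\dot{C}\leq 4\sqrt{N(C)}$ with the quadratic $N(C)=[2(\gamma_{0}+\gamma)^{2}-\epsilon^{2}]C^{2}+\epsilon^{2}C$; the case split in (\ref{TP}) is precisely whether the vertex of this parabola lies inside $[0,1]$ (maximum $2(\gamma_{0}+\gamma)^{2}$ at $C=1$ when $4(\gamma_{0}+\gamma)^{2}\geq\epsilon^{2}$, interior maximum $\epsilon^{4}/(4\epsilon^{2}-8(\gamma_{0}+\gamma)^{2})$ otherwise). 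Your route still proves the theorem, but only after one step you did not state: verify that the stated rate dominates yours, i.e.\ $2(\gamma_{0}+\gamma)+\sqrt{4(\gamma_{0}+\gamma)^{2}+\epsilon^{2}}\leq 4\sqrt{\max N}$ in both regimes (it does, comparing against $4\sqrt{2}(\gamma_{0}+\gamma)$ in the first branch and against $2\epsilon^{2}/\sqrt{\epsilon^{2}-2(\gamma_{0}+\gamma)^{2}}$ in the second), so the stated $T_{p}$, being smaller than the time your tighter bound certifies, works a fortiori. Either add that explicit comparison, or reproduce the paper's looser estimates to land exactly on (\ref{TP}).
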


\begin{corollary}\label{PD_proposition1}
When $\epsilon^{2}=2(\gamma_{0}+\gamma)^{2}$, the sampling period $T_{p}$ in (\ref{TP}) can be replaced by $T_{p}'$ to guarantee the same robustness as in Theorem \ref{PD_Theorem1}, where
\begin{equation}
T_{p}'=\frac{1-\sqrt{\bar{C}}}{2\sqrt{2}(\gamma_{0}+\gamma)}.
\end{equation}
\end{corollary}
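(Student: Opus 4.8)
The plan is to refine the differential inequality underlying the proof of Theorem~\ref{PD_Theorem1}, but to track the \emph{square root} of the coherence rather than the coherence itself; this is exactly what turns the factor $1-\bar C$ appearing in $T_p$ into the larger factor $1-\sqrt{\bar C}$ appearing in $T_p'$. First I would pass to the Bloch representation $\rho=\tfrac12(I+x\sigma_x+y\sigma_y+z\sigma_z)$, in which the dissipator in (\ref{PD_masterEq}) contributes $\dot x=-2\gamma_t x$, $\dot y=-2\gamma_t y$, $\dot z=0$, while the Hamiltonian produces the precession $\dot z=\epsilon_x y-\epsilon_y x$. Computing $\dot C=2x\dot x+2y\dot y$, the Hamiltonian part collapses to $2z(\epsilon_y x-\epsilon_x y)=-2z\dot z$ and the dissipative part to $-4\gamma_t C$, giving the key identity $\dot C=-4\gamma_t C-2z\dot z$. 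Two elementary bounds then hold uniformly over all admissible uncertainties: Cauchy--Schwarz gives $|\dot z|\le\sqrt{\epsilon_x^2+\epsilon_y^2}\,\sqrt{x^2+y^2}\le\epsilon\sqrt C$, and $|\mathbf r|\le1$ gives $z^2\le 1-C$.

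Next I would set $w=\sqrt C$ and rewrite the identity as $\dot w=\dot C/(2w)=-2\gamma_t w-z\dot z/w$. Using $\gamma_t\le\gamma_0+\gamma$, $|\dot z|\le\epsilon w$ and $|z|\le\sqrt{1-w^2}$, this yields the worst-case inequality
\begin{equation*}
\dot w\ \ge\ -2(\gamma_0+\gamma)\,w-\epsilon\sqrt{1-w^2}\ \ge\ -\sqrt{4(\gamma_0+\gamma)^2+\epsilon^2},
\end{equation*}
the last step being one more Cauchy--Schwarz applied to $(2(\gamma_0+\gamma),\epsilon)$ and $(w,\sqrt{1-w^2})$. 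This is precisely the estimate behind Theorem~\ref{PD_Theorem1}; that theorem then loses a factor by returning to $\dot C=2w\dot w\ge-2\sqrt{4(\gamma_0+\gamma)^2+\epsilon^2}$ and bounding $w\le1$, which is the step I would \emph{avoid} here.

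For the corollary I would integrate the inequality for $w$ directly. Under the hypothesis $\epsilon^2=2(\gamma_0+\gamma)^2$ we have $\epsilon^2\le4(\gamma_0+\gamma)^2$, so $\sqrt{4(\gamma_0+\gamma)^2+\epsilon^2}\le\sqrt{8(\gamma_0+\gamma)^2}=2\sqrt2(\gamma_0+\gamma)$ and hence $\dot w\ge-2\sqrt2(\gamma_0+\gamma)$. Integrating from $w(0)=\sqrt{C_0}=1$ gives $\sqrt{C(t)}=w(t)\ge 1-2\sqrt2(\gamma_0+\gamma)\,t$. Therefore, for $t\le T_p'=\frac{1-\sqrt{\bar C}}{2\sqrt2(\gamma_0+\gamma)}$ the right-hand side is at least $\sqrt{\bar C}\ge0$, whence $C(t)\ge\bar C$ and the state remains in $\mathcal D_p$, which is the claim. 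I would close by writing $1-\bar C=(1-\sqrt{\bar C})(1+\sqrt{\bar C})$, so that $T_p'/T_p=2/(1+\sqrt{\bar C})\ge1$, confirming that $T_p'$ genuinely improves on the period $T_p$ in (\ref{TP}).

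The main obstacle is not the integration but the rigorous derivation of the uniform differential inequality: one must verify that $|z|\le\sqrt{1-C}$ holds along the \emph{entire} trajectory (it does, since $|\mathbf r|^2=x^2+y^2+z^2$ is nonincreasing under phase damping, as $\tfrac{d}{dt}|\mathbf r|^2=-4\gamma_t C\le0$), and that the two Cauchy--Schwarz steps remain simultaneously valid for the worst-case $\gamma_t=\gamma_0+\gamma$ and the worst-case direction of $(\epsilon_x,\epsilon_y)$. The entire gain of the corollary rests on the fact that integrating at the level of $w=\sqrt C$ rather than at the level of $C$ is legitimate precisely because $w(t)$ is certified to stay nonnegative on $[0,T_p']$.
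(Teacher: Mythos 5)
Your proof is correct and follows essentially the same route as the paper: both reduce the Bloch-equation identity $\dot C_t=-4\gamma_t C_t-2z_t\dot z_t$ to a differential inequality of the form $\dot C_t\geq -\text{const}\cdot\sqrt{C_t}$ and then integrate at the level of $w=\sqrt{C_t}$ to obtain $\sqrt{C_t}\geq 1-2\sqrt{2}(\gamma_0+\gamma)t$, which is exactly the paper's step $2\,\mathrm{d}\sqrt{C_t}\geq-4\epsilon\,\mathrm{d}t$ with $N_t=\epsilon^2 C_t$ in the special case $\epsilon^2=2(\gamma_0+\gamma)^2$. The only (harmless) difference is that your Cauchy--Schwarz bound $|\dot z_t|\leq\epsilon\sqrt{C_t}$ is tighter than the paper's $|x_t|+|y_t|\leq\sqrt{2C_t}$, so your intermediate slope $\sqrt{4(\gamma_0+\gamma)^2+\epsilon^2}=\sqrt{6}(\gamma_0+\gamma)$ carries slack that you then deliberately relax to $2\sqrt{2}(\gamma_0+\gamma)$ to recover the stated $T_p'$.
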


If $H_{\Delta}\equiv 0$, we can design the sampling period using the following proposition.
\begin{proposition}\label{PD_proposition2}
For a single qubit with initial state
$\rho_{0}$ satisfying $C_{0}=[\text{tr}(\rho_{0}\sigma_{x})]^{2}+[\text{tr}(\rho_{0}\sigma_{y})]^{2}=1$ at time $t=0$, the system
evolves to $\rho(t)$ subject to (\ref{PD_masterEq}) where
$H(t)=I_{z}$ and the coupling strength of the phase damping decoherence is $\gamma_{t}=\gamma_{0}+\delta\gamma_{t}$ ($|\delta\gamma_{t}|\leq \gamma$). If $t\in [0, T_{p}'']$ with
\begin{equation}
T_{p}''=-\frac{\ln \bar{C}}{4(\gamma_{0}+\gamma)},
\end{equation}
the state will remain in $\mathcal{D}_{p}=\{\rho_{t} :
[\text{tr}(\rho_{t}\sigma_{x})]^{2}+[\text{tr}(\rho_{t}\sigma_{y})]^{2}\geq \bar{C}, 0< \bar{C}\leq1 \}$.
If a periodic projective measurement is made with the
operator $\sigma_{x}$, the sampling period $T_{p}''$ can guarantee that the system's state remains in $\mathcal{D}_{p}$.
\end{proposition}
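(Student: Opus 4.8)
The plan is to pass to the Bloch representation and show that, under the stated conditions, the coherence $C=x^{2}+y^{2}$ (with $x=\mathrm{tr}(\rho_t\sigma_x)$, $y=\mathrm{tr}(\rho_t\sigma_y)$) obeys a single scalar ODE that can be integrated explicitly. First I would compute the action of each term of (\ref{PD_masterEq}) on the Bloch components. With $H(t)=I_z=\tfrac12\sigma_z$, using $\dot r_k=\mathrm{tr}(\dot\rho_t\sigma_k)$ and the cyclic property of the trace, the Hamiltonian part $-i[H,\rho_t]$ yields the rotation $\dot x|_{\mathrm{H}}=-y$, $\dot y|_{\mathrm{H}}=x$, $\dot z|_{\mathrm{H}}=0$, while the dissipator $\gamma_t(\sigma_z\rho_t\sigma_z-\rho_t)$ contributes $-2\gamma_t x$ to $\dot x$, $-2\gamma_t y$ to $\dot y$, and $0$ to $\dot z$ (here one uses $\sigma_z\sigma_x\sigma_z=-\sigma_x$, $\sigma_z\sigma_y\sigma_z=-\sigma_y$, $\sigma_z\sigma_z\sigma_z=\sigma_z$). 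Collecting terms gives $\dot x=-y-2\gamma_t x$ and $\dot y=x-2\gamma_t y$.

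Next I would differentiate $C=x^{2}+y^{2}$ along these dynamics. The rotation contributions cancel ($2x(-y)+2y(x)=0$), leaving the closed equation $\dot C=-4\gamma_t C$. This is the key simplification: the Hamiltonian precession about the $z$-axis does not affect coherence, so the problem reduces to a one-dimensional linear decay whose rate is set solely by $\gamma_t$. Integrating and using the initial condition $C_0=1$ gives $C(t)=\exp\!\big(-4\int_0^t\gamma_s\,ds\big)$.

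Finally I would extract the worst-case bound. Since $\gamma_s=\gamma_0+\delta\gamma_s\le\gamma_0+\gamma$, the smallest possible coherence over all admissible uncertainty trajectories is attained in the limit $\delta\gamma_s\equiv\gamma$, where $C(t)=\exp[-4(\gamma_0+\gamma)t]$; hence $C(t)\ge\exp[-4(\gamma_0+\gamma)t]$ for every admissible $\delta\gamma_s$. Requiring this lower bound to stay at least $\bar C$ gives $-4(\gamma_0+\gamma)t\ge\ln\bar C$, i.e. $t\le -\ln\bar C/[4(\gamma_0+\gamma)]=T_p''$ (note $\ln\bar C\le0$, so $T_p''\ge0$). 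Thus for all $t\in[0,T_p'']$ the state satisfies $C(t)\ge\bar C$ irrespective of $\delta\gamma_s$, which is exactly membership in $\mathcal D_p$; moreover a projective measurement with $\sigma_x$ collapses the state onto an eigenstate of $\sigma_x$, restoring $C=1$, so the argument repeats identically on each sampling interval.

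I do not expect a genuine obstacle here; the computation is short once the Bloch equations are in hand. The one point requiring care is verifying that the rotation terms cancel exactly in $\dot C$ — i.e. that with $H_\Delta\equiv0$ the coherence decouples from the phase, so that $\dot C$ depends only on $\gamma_t$ and not on the instantaneous split between $x$ and $y$. This is precisely what fails when $H_\Delta\not\equiv0$, which is why Theorem \ref{PD_Theorem1} needs a more conservative period. After that, identifying the extremal trajectory $\delta\gamma_s\equiv\gamma$ as the worst case — consistent with the optimality remark following the amplitude-damping results — completes the proof.
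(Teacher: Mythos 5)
Your proposal is correct and follows essentially the same route as the paper: both derive the Bloch equations $\dot x=-y-2\gamma_t x$, $\dot y=x-2\gamma_t y$, observe that the rotation terms cancel so that $\dot C=-4\gamma_t C$, and bound $\gamma_t\le\gamma_0+\gamma$ to obtain $C_t\ge e^{-4(\gamma_0+\gamma)t}\ge\bar C$ for $t\le T_p''$. Your exact integration $C(t)=\exp\bigl(-4\int_0^t\gamma_s\,ds\bigr)$ before bounding, versus the paper's direct differential inequality $\dot C_t\ge-4(\gamma_0+\gamma)C_t$, is only a cosmetic difference, and your closing observation that the $\sigma_x$ measurement resets $C=1$ at each sampling instant is a correct (and in the paper implicit) justification of the periodic-sampling claim.
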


\begin{remark}
For $2(\gamma+\gamma_{0})^{2}=\epsilon^{2}$,
it is straightforward to prove that $T_{p}'\geq T_{p}$. The relationship $T_{p}''\geq T_{p}$ can be proved by the following steps: (a) Denote $Y=1-\bar{C}$; (b) define $F(Y)=T_{p}''- T_{p}$; (c) observe $F(Y=0)=0$; and (d) verify $\frac{\text{d}F(Y)}{\text{d}Y}\geq 0$. From the proof of Proposition \ref{PD_proposition2}, it is clear that the sampling period $T_{p}''$ is optimal to guarantee the required robustness when $H_{\Delta}\equiv 0$. For this case with phase damping decoherence, we can also make projective measurements with the operator $\sigma_{y}$, which does not affect the conclusions. Moreover, in this case, no unitary control is required and measurement is the only tool needed for guaranteeing the required robustness. It is worth mentioning that several methods based only on  measurements have recently been proposed for controlling quantum systems (see, e.g., \cite{Roa et al 2006}-\cite{Ashhab and Nori 2010}).
\end{remark}

\begin{remark}
We can also consider a class of imperfect measurements. This class of uncertainties may arise from precision limitations of the measurement apparatus or from system errors in the measurement device.
Measurement with the operator $\sigma_{z}$ will make the system collapse into $|0\rangle$ or $|1\rangle$ (eigenstates of $H_{0}$). We consider the imperfect measurement model shown as in Fig. \ref{MeasurementModel}. $p_{01}$ is the error probability of measurement from $|0\rangle$ to $|1\rangle$, that is, the probability that one obtains the result $|1\rangle$ when making a measurement on the system in $|0\rangle$; $p_{10}$ is the error probability of measurement from $|1\rangle$ to $|0\rangle$, where $0\leq p_{01}<1$ and $0\leq p_{10}<1$. This class of imperfect measurements does not affect the effectiveness of the sampled-data design. Thus, the proposed method can tolerate this additional uncertainty in the sampling process.

\begin{figure}
\centering
\includegraphics[width=3.2in]{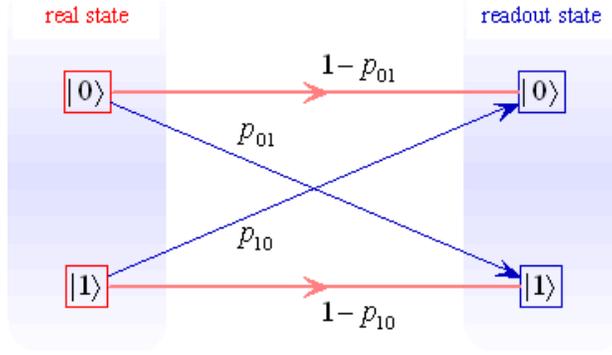}
\caption{The model for imperfect measurement. $p_{01}$ is the error probability of measurement from $|0\rangle$ to $|1\rangle$ and $p_{10}$ is the error probability of measurement from $|1\rangle$ to $|0\rangle$.}
\label{MeasurementModel}
\end{figure}
\end{remark}

\subsection{Depolarizing decoherence}
For a single qubit, the depolarizing decoherence will reduce the purity $P=\text{tr}(\rho^{2})$ of the system's state. The objective is to guarantee that the purity of the state is not less than $\bar{P}$ by periodic sampling when there exist uncertainties in the coupling strength of system-environment interaction and in the system Hamiltonian. To determine the required sampling period, we have the following results.

\begin{theorem}\label{DD_Theorem}
For a single qubit with initial state
$\rho_{0}$ satisfying $\text{tr}(\rho_{0}^{2})=1$ at time $t=0$, the system
evolves to $\rho_{t}$ subject to (\ref{DD_masterEq}) where
$H(t)=[1+\omega(t)]I_{z}+\epsilon_{x}(t) I_{x}+\epsilon_{y}(t)
I_{y}$ ($\sqrt{\epsilon_{x}^{2}(t)+\epsilon_{y}^{2}(t)}\leq
\epsilon$, $\epsilon >0$, $|\omega(t)|\leq \omega$ and $\omega\geq 0$) and the coupling strength of depolarizing decoherence is $\gamma_{t}=\gamma_{0}+\delta\gamma_{t}$ ($|\delta\gamma_{t}|\leq \gamma$). If $t\in [0, T_{d}]$ with
\begin{equation}
T_{d}=-\frac{\ln (2\bar{P}-1)}{8(\gamma_{0}+\gamma)},
\end{equation}
the state will remain in $\mathcal{D}_{d}=\{\rho_{t} :
\text{tr}(\rho_{t}^{2})\geq \bar{P}, 0.5< \bar{P}\leq 1\}$.
If periodic projective measurements are made with the
operator $\sigma_{z}$, the sampling period $T_{d}$ can guarantee that the state remains in $\mathcal{D}_{d}$.
\end{theorem}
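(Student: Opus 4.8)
The plan is to work entirely in the Bloch-vector representation, exploiting the fact that the purity is a simple function of the Bloch radius. From (\ref{blochEq}) one has $P=\text{tr}(\rho^{2})=\tfrac{1}{2}(1+|\mathbf{r}|^{2})$, so the membership condition $\text{tr}(\rho_{t}^{2})\geq\bar{P}$ defining $\mathcal{D}_{d}$ is equivalent to $|\mathbf{r}(t)|^{2}\geq 2\bar{P}-1$. The central observation is that the commutator term $-i[H(t),\rho_{t}]$ generates a (time-dependent) rotation of $\mathbf{r}$ and therefore leaves $|\mathbf{r}|$ invariant; consequently all the Hamiltonian uncertainties $\omega(t),\epsilon_{x}(t),\epsilon_{y}(t)$ are irrelevant to the purity, and only the depolarizing dissipator can change $|\mathbf{r}|$. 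This decouples the problem into a single scalar equation for the radius.

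First I would reduce the depolarizing dissipator using the Pauli completeness identity $\sum_{\mu=0}^{3}\sigma_{\mu}X\sigma_{\mu}=2\,\text{tr}(X)\,I$ (with $\sigma_{0}=I$). Applied to $X=\rho_{t}$ with $\text{tr}\,\rho_{t}=1$, this gives $\sum_{j=x,y,z}\sigma_{j}\rho_{t}\sigma_{j}=2I-\rho_{t}$, so the dissipator in (\ref{DD_masterEq}) collapses to $\gamma_{t}(2I-4\rho_{t})=-2\gamma_{t}\,(\mathbf{r}\cdot\sigma)$, where I used $2\rho_{t}=I+\mathbf{r}\cdot\sigma$. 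Matching this against $\dot{\rho}_{t}=\tfrac{1}{2}\dot{\mathbf{r}}\cdot\sigma$ for the dissipative contribution yields $\dot{\mathbf{r}}|_{\text{dec}}=-4\gamma_{t}\,\mathbf{r}$. Combining with the rotation from $H(t)$, which is orthogonal to $\mathbf{r}$, the radius obeys
\begin{equation}
\frac{d}{dt}|\mathbf{r}(t)|^{2}=2\,\mathbf{r}\cdot\dot{\mathbf{r}}=-8\gamma_{t}\,|\mathbf{r}(t)|^{2}.
\end{equation}

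Integrating this linear equation from the pure initial state $|\mathbf{r}(0)|=1$ gives $|\mathbf{r}(t)|^{2}=\exp(-8\int_{0}^{t}\gamma_{s}\,ds)$. Using the uncertainty bound $\gamma_{t}=\gamma_{0}+\delta\gamma_{t}\leq\gamma_{0}+\gamma$ to control the worst case, I obtain the lower bound $|\mathbf{r}(t)|^{2}\geq e^{-8(\gamma_{0}+\gamma)t}$ and hence $P(t)\geq\tfrac{1}{2}(1+e^{-8(\gamma_{0}+\gamma)t})$. Imposing $P(t)\geq\bar{P}$ is then equivalent to $e^{-8(\gamma_{0}+\gamma)t}\geq 2\bar{P}-1$, and solving for the largest admissible $t$ produces exactly $T_{d}=-\ln(2\bar{P}-1)/[8(\gamma_{0}+\gamma)]$; the assumption $0.5<\bar{P}\leq 1$ makes $2\bar{P}-1\in(0,1]$, so that the logarithm is well defined and $T_{d}\geq 0$. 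Finally, since a $\sigma_{z}$ projective measurement collapses the state onto $|0\rangle$ or $|1\rangle$, both of which have $|\mathbf{r}|=1$, each sampling instant resets the radius to its pure value, so the same per-interval estimate guarantees that periodic sampling with period $T_{d}$ keeps the state in $\mathcal{D}_{d}$ for all time.

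I do not anticipate a genuinely hard obstacle here: the main point requiring care is getting the numerical factors right in the completeness identity and in the passage from $\dot{\rho}_{t}$ to $\dot{\mathbf{r}}$, since an error there propagates directly into the coefficient $8(\gamma_{0}+\gamma)$ in $T_{d}$. The only conceptual step worth stating explicitly is the orthogonality of the Hamiltonian-induced motion to $\mathbf{r}$, which is what lets me discard the unitary part (and all its uncertainties) from the radius equation.
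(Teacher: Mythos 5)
Your proof is correct and follows essentially the same route as the paper's: pass to the Bloch vector, obtain the scalar radius equation $\frac{d}{dt}|\mathbf{r}_{t}|^{2}=-8\gamma_{t}|\mathbf{r}_{t}|^{2}$ (the Hamiltonian part being a rotation that drops out), bound $\gamma_{t}\leq\gamma_{0}+\gamma$, integrate, and solve $e^{-8(\gamma_{0}+\gamma)T_{d}}=2\bar{P}-1$. If anything, your write-up is more careful than the paper's, which loosely identifies $\text{tr}(\rho_{t}^{2})$ with $x_{t}^{2}+y_{t}^{2}+z_{t}^{2}$ instead of $\tfrac{1}{2}(1+|\mathbf{r}_{t}|^{2})$ and writes its Bloch matrix without the $H_{\Delta}$ terms, whereas you state the purity--radius relation explicitly, justify discarding the unitary part by orthogonality, and supply the (implicit in the paper) observation that a $\sigma_{z}$ measurement resets the state to a pure one, making the per-period bound apply to every sampling interval.
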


\begin{remark}
The selection of measurement operators (i.e., $\sigma_{x}$, $\sigma_{y}$ or $\sigma_{z}$), uncertainties in the system Hamiltonian ($H_{\Delta}\neq 0$ or $H_{\Delta}\equiv 0$) and the imperfect measurement described in Fig. \ref{MeasurementModel} do not affect the conclusion in Theorem \ref{DD_Theorem}. The sampling period $T_{d}$ is also optimal to guarantee the required robustness.
\end{remark}

The sampling periods for the different cases considered above are summarized in Table 1.

\newpage

Table 1: Summary of sampling periods for different cases. $H_{\Delta}(x,y,z)=\omega(t)I_{z}+\epsilon_{x}(t) I_{x}+\epsilon_{y}(t)I_{y}$ (where $\sqrt{\epsilon_{x}^{2}(t)+\epsilon_{y}^{2}(t)}\leq
\epsilon$, $\epsilon >0$, $|\omega(t)|\leq \omega$ and $\omega\geq 0$), $f(\epsilon, \gamma_{0},\gamma)= \frac{1}{2}-\frac{\gamma_{0}+\gamma}{2\sqrt{4\epsilon^{2}+(\gamma_{0}+\gamma)^{2}}}$, and coupling strength $\gamma_{t}=\gamma_{0}+\delta\gamma_{t}$ ($|\delta\gamma_{t}|\leq \gamma$). $H_{\Delta}(x,y,z)$ is also considered for the two cases $p_{0}\leq f(\epsilon, \gamma_{0},\gamma)$ and $\epsilon^{2}=2(\gamma_{0}+\gamma)^{2}$. The parameters values $p_{0}=0.01$, $\gamma_{0}=0.9$, $\gamma=0.1$ and $\bar{C}=\bar{P}=0.95$ are assumed for the calculation of the right two columns. When $4(\gamma_{0}+\gamma)^{2}\geq \epsilon^{2}$, $\overline{T}=\frac{1-\bar{C}}{4\sqrt{2}(\gamma_{0}+\gamma)}$;
when $4(\gamma_{0}+\gamma)^{2}< \epsilon^{2}$, $\overline{T}=\frac{(1-\bar{C})\sqrt{\epsilon^{2}-2(\gamma_{0}+\gamma)^{2}}}{2\epsilon^{2}}$.

\begin{table}[!htbp]
\centering
\begin{tabular}{|c|c|c|c|c|}
\hline
\multicolumn{2}{|c|}{cases} &  sampling period & $\epsilon=0.2$ & $\epsilon=\sqrt{2}$ \\
\hline
\multicolumn{2}{|c|}{closed system with $H_{\Delta}(x,y,z)$} & $T_{c}=\frac{\arccos(1-2 p_{0})}{\epsilon}$ & $T_{c}=1.0017$  & $T_{c}=0.1417$ \\
\hline
amplitude & $H_{\Delta}(x,y,z)$ & $T_a=\frac{2p_{0}}{\sqrt{4\epsilon^{2}+(\gamma_{0}+\gamma)^{2}}+(\gamma_{0}+\gamma)}$ & $T_{a}=0.0096$ & $T_{a}=0.0050$  \\
\cline{2-5}
damping & $p_{0}\leq f(\epsilon, \gamma_{0},\gamma)$ & $T_{a}'=\frac{2p_{0}}{4\epsilon\sqrt{p_{0}-p_{0}^{2}}+2(\gamma_{0}+\gamma)(1-p_{0})}$ & $-$ & $T_{a}'=0.0079$ \\
\cline{2-5}
decoherence & $H_{\Delta}\equiv0$ & $T_a''=-\frac{\ln (1-p_{0})}{\gamma_{0}+\gamma}$ & $T_{a}''=0.0101$ & $T_{a}''=0.0101$  \\
\hline
phase & $H_{\Delta}(x,y,z)$ & $T_{p}=\overline{T}$ & $T_{p}=0.0088$ & $T_{p}=0.0088$ \\
\cline{2-5}
damping & $\epsilon^{2}=2(\gamma_{0}+\gamma)^{2}$ & $T_{p}'=\frac{1-\sqrt{\bar{C}}}{2\sqrt{2}(\gamma_{0}+\gamma)}$ & $-$ & $T_{p}'=0.0090$  \\
\cline{2-5}
decoherence & $H_{\Delta}\equiv0$ & $T_{p}''=-\frac{\ln \bar{C}}{4(\gamma_{0}+\gamma)}$ & $T_{p}''=0.0128$ & $T_{p}''=0.0128$  \\
\hline
depolarizing & $H_{\Delta}(x,y,z)$ & $T_{d}=-\frac{\ln (2\bar{P}-1)}{8(\gamma_{0}+\gamma)}$ & $T_{d}=0.0131$ & $T_{d}=0.0131$  \\
\cline{2-5}
decoherence & $H_{\Delta}\equiv0$ & $T_{d}=-\frac{\ln (2\bar{P}-1)}{8(\gamma_{0}+\gamma)}$ & $T_{d}=0.0131$ & $T_{d}=0.0131$  \\
\hline
\end{tabular}

\end{table}

\subsection{Illustrative examples}
\begin{example}[Sampling periods]
The values of sampling periods are shown in the right two columns of Table 1 for several specific cases, where we have assumed $p_{0}=0.01$, $\gamma_{0}=0.9$, $\gamma=0.1$ and $\bar{C}=\bar{P}=0.95$. Further, we can consider a real quantum system of a superconducting box in \cite{Wiseman and Milburn 2009}, \cite{Schuster et al 2005}. Let the resonance frequency $\tilde{\omega}_{0}=2\pi\times 100\text{MHz}$ and the cavity decay rate $\tilde{\gamma}_{0}=2\pi\times 0.8\text{MHz}$. Assume that $\tilde{\gamma}=\frac{2\pi\times 0.8\text{MHz}}{9}$ and $\tilde{\epsilon}=2\pi\times 1.0\text{MHz}$. Hence, the cavity decay time $T_{\gamma}=198.9\text{ns}$. Using the results in Table 1, we can get the real sampling periods as $\tilde{T}_{c}=8.0\text{ns}$, $\tilde{T}_{a}=1.0\text{ns}$, $\tilde{T}_{a}''=1.8\text{ns}$, $\tilde{T}_{p}=1.6\text{ns}$, $\tilde{T}_{p}''=2.3\text{ns}$ and $\tilde{T}_{d}=2.5\text{ns}$.
\end{example}

\begin{example}[Unitary control for Case A)]
Theorem \ref{CD_Theorem2} gives a sufficient condition for designing a unitary control to guarantee the required robustness. Here we employ a Lyapunov method \cite{Mirrahimi et al 2005}-\cite{Yi et al 2011} to design such a unitary control where the Lyapunov function is constructed based
on the Hilbert-Schmidt distance between a state $|\psi\rangle$ and
the sliding mode state $|0\rangle$; i.e.,
$V(|\psi\rangle, |0\rangle)=\frac{1}{2}(1-|\langle
0|\psi\rangle|^{2}).$ The control values can be selected as
(for details, see \cite{Dong and Petersen 2011Automatica},
\cite{Kuang and Cong 2008}):
\begin{equation}\label{Lyapunov_control}
 u_{k}=K_{k}f_{k}(\Im[e^{i\angle
\langle \psi|\phi_{j}\rangle} \langle 0|I_{k}|\psi\rangle]),
\ \ \ (k=x,y,z)
\end{equation}
where $\Im[a+bi]=b$ ($a, b\in \mathbf{R}$). Here $\angle c$ denotes the
argument of a complex number $c$, the parameter $K_{k}>0$ may be used to adjust
the control amplitude, and $f(\cdot)$ satisfies $xf(x)\geq 0$. We define
$\angle \langle \psi|0\rangle=0^{\circ}$ when $\langle
\psi|0\rangle=0$ and adopt the parameter values of $p_{0}=0.01$, $\epsilon=0.2$ and
$\beta=0.05$. From the simulation in \cite{Dong and Petersen
2011Automatica}, we find that the Lyapunov control is
not sensitive to small uncertainties in the system Hamiltonian. Additional
simulation results suggest that the robustness of the Lyapunov
control can be enhanced if we choose the terminal condition
$|\langle 1|\psi(t)\rangle|^{2}\leq \eta \alpha p_{0}$ (where
$0<\eta<1$) instead of $|\langle 1|\psi(t)\rangle|^{2}\leq \alpha
p_{0}$. Here, we select $\eta=0.8$. Hence, we design the
sampling period $T_{c}=1.0017$ using (\ref{period}). Using
Theorem \ref{CD_Theorem2}, we select $\alpha=2.5\times 10^{-3}$. We design the Lyapunov control using (\ref{Lyapunov_control}) and the terminal condition $|\langle 1|\psi(t)\rangle|^{2}\leq \eta \alpha
p_{0}=2.0\times 10^{-5}$ with
the control Hamiltonian $H_{u}=\frac{1}{2}u(t)\sigma_{y}$. Using
(\ref{Lyapunov_control}), we select $u(t)=K(\Im[e^{i\angle \langle
\psi(t)|0\rangle} \langle 0|\sigma_{y}|\psi(t)\rangle])$,
$K=500$, and let the time stepsize be $\delta t=10^{-6}$. We obtain the probability curve for $|0\rangle$ shown in Fig.
\ref{Lyapunovcontrol}(a) and the control value shown in Fig. \ref{Lyapunovcontrol}(b).
For the noise $\epsilon(t)I_x$ or $\epsilon(t)I_y$ where
$\epsilon(t)$ is a uniform distribution in $[-0.2, 0.2]$, additional
simulation results show that the state is also driven into
$\mathcal{E}_{c}$ using the Lyapunov control in Fig. \ref{Lyapunovcontrol}(b).

\begin{figure}
\centering
\includegraphics[width=4.6in]{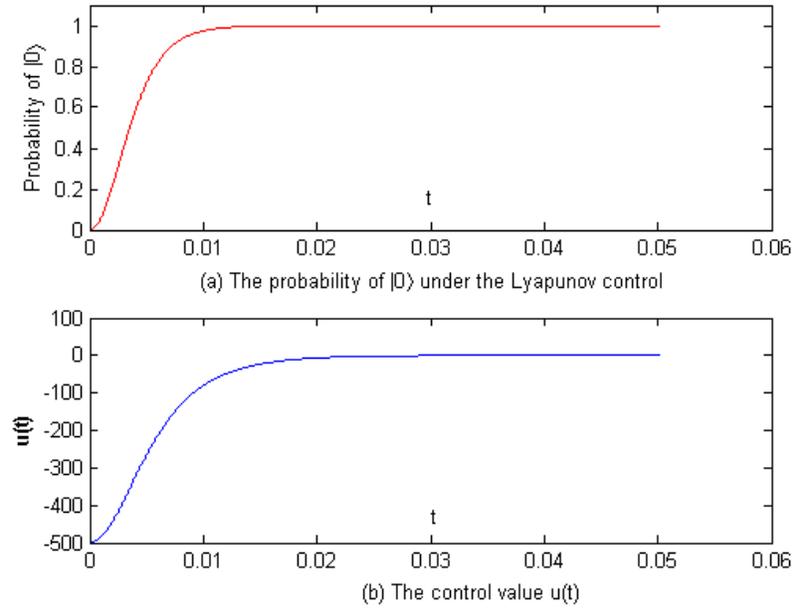}
\caption{The probability of $|0\rangle$ and the control value $u(t)$ under the Lyapunov control.} \label{Lyapunovcontrol}
\end{figure}

\end{example}

\begin{example}[Unitary control for Case B)]
For amplitude damping decoherence, Theorem
\ref{AD_Theorem2} gives a sufficient condition for designing a unitary control to guarantee the required robustness.  Here, we employ a constant control $H_{u}=\frac{1}{2}u \sigma_{y}$ ($u=6466$), and assume $p_{0}=0.01$, $\epsilon=0.2$, $\gamma_{0}=0.9$ and $\beta=0.05$. Using Theorem \ref{AD_Theorem2}, we may select $\alpha=0.05$. Let the time stepsize be $\delta t=10^{-7}$. The curve for $z_{t}=\text{tr}(\rho_{t}\sigma_{z})$ is shown in Fig. \ref{Example_case2}.

\begin{figure}
\centering
\includegraphics[width=3.6in]{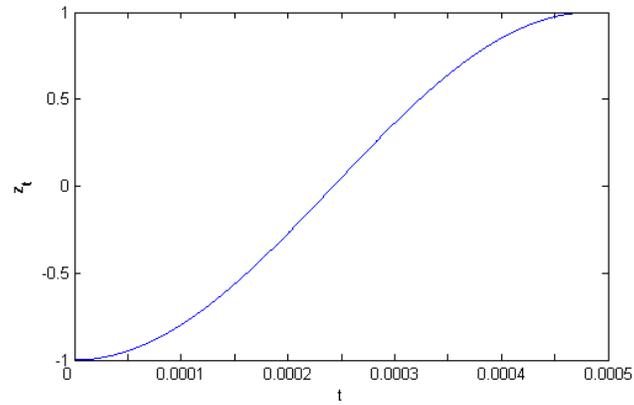}
\caption{The curve of $z_{t}$ for the case with amplitude damping decoherence, where $H_{u}=3233\sigma_{y}$.} \label{Example_case2}
\end{figure}

\end{example}

\begin{remark}
The required unitary control can be designed using strategies such as the Lyapunov method and optimal control theory. In Example 2 and Example 3, we used simulation to find appropriate control amplitudes for achieving the required objectives. Additional simulation experiments also show that $u(t)$ can tolerate small uncertainties. Since it is necessary to drive the state back to a subset of the sliding mode domain within a small time period, the required control amplitudes generally are relatively large, which is similar to the case of decoherence control based on dynamical decoupling \cite{Viola and Knill 2003}, \cite{Viola et al 1999}. The selection of small $\beta$ makes it reasonable that we first design the unitary control by ignoring possible uncertainties and then verify the robustness of the unitary control to uncertainties by simulation. Here, we present only simulated examples to demonstrate how such a unitary control can be designed. A systematic investigation into the design of the unitary control and finding optimal control amplitudes that can tolerate uncertainties will be the subject of future work.
\end{remark}

\section{Proof of the Main Results}\label{Sec4}

\subsection{Proof of Theorem \ref{CD_Theorem2}}
To prove Theorem \ref{CD_Theorem2}, we first prove two lemmas (Lemma
\ref{lemmaB} and Lemma \ref{lemma2}). Lemma
\ref{lemmaB} compares the probabilities of failure for $H(t)=[1+\omega(t)]I_{z}+\epsilon\cos\phi_{0}I_{y}+\epsilon\sin\phi_{0}I_{x}$
and
$H(t)=\epsilon\cos\phi_{0}I_{y}+\epsilon\sin\phi_{0}I_{x}$. Lemma
\ref{lemmaB} together with Lemma \ref{lemma2} demonstrates that $H=\epsilon I_{x}$ can be used to estimate an upper bound on the probability of failure for $H(t)=[1+\omega(t)]I_{z}+\epsilon_{x}(t) I_{x}+\epsilon_{y}(t)
I_{y}$ when $z_{0}=1$.

\begin{lemma}\label{lemmaB}
For a single qubit with initial state
$(x_{0},y_{0},z_{0})=(0, 0, 1)$, the system evolves to
$(x^{A}_{t},y^{A}_{t},z^{A}_{t})$ and
$(x^{B}_{t},y^{B}_{t},z^{B}_{t})$ under the action of
$H^{A}=[1+\omega(t)]I_{z}+\epsilon\cos\phi_{0}I_{y}+\epsilon\sin\phi_{0}I_{x}$
(with constant $\epsilon> 0$ and $|\omega(t)|\leq \omega$) and
$H^{B}=\epsilon\cos\phi_{0}I_{y}+\epsilon\sin\phi_{0}I_{x}$,
respectively. For arbitrary $t\in [0,
\frac{\pi}{2\sqrt{4+\epsilon^{2}}}]$, $z^{A}_{t}\geq z^{B}_{t}$.
\end{lemma}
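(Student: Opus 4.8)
The plan is to track only the $z$-component of the Bloch vector and to show that the uncertain longitudinal field $[1+\omega(t)]I_z$ cannot make $z_t$ decrease faster than the purely transverse field of case B does. Writing $\rho=\frac{1}{2}(I+\mathbf{r}\cdot\sigma)$ and using (\ref{Case1Eq}), the closed-system dynamics reduce to the precession equation $\dot{\mathbf{r}}=\mathbf{c}\times\mathbf{r}$, where for $H=c_xI_x+c_yI_y+c_zI_z$ the field is $\mathbf{c}^{A}=(\epsilon\sin\phi_0,\epsilon\cos\phi_0,1+\omega(t))$ in case A and $\mathbf{c}^{B}=(\epsilon\sin\phi_0,\epsilon\cos\phi_0,0)$ in case B. Both trajectories start at the north pole and stay on the unit sphere, so $(x_t)^2+(y_t)^2+(z_t)^2=1$ throughout.

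The key observation is that the $z$-component of the torque, $\dot z_t=(\mathbf{c}\times\mathbf{r})_z=\epsilon\sin\phi_0\,y_t-\epsilon\cos\phi_0\,x_t$, does not contain $c_z=1+\omega(t)$ at all: the uncertainty enters $z_t$ only indirectly, through $x_t,y_t$. By Cauchy--Schwarz, $|\dot z^{A}_t|\le\epsilon\sqrt{(x^{A}_t)^2+(y^{A}_t)^2}=\epsilon\sqrt{1-(z^{A}_t)^2}$ for every admissible $\omega(t)$ and every $\phi_0$, giving the differential inequality $\dot z^{A}_t\ge-\epsilon\sqrt{1-(z^{A}_t)^2}$. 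Introducing the polar angle $\theta^{A}_t=\arccos z^{A}_t$ (well defined and absolutely continuous while $z^{A}_t\in(-1,1)$), this reads $\dot\theta^{A}_t\le\epsilon$ with $\theta^{A}_0=0$, hence $\theta^{A}_t\le\epsilon t$. For case B the transverse field is constant of magnitude $\epsilon$ and lies in the equatorial plane, so starting from the pole one computes directly $z^{B}_t=\cos(\epsilon t)$, i.e. $\theta^{B}_t=\epsilon t$ with equality. On $t\in[0,\frac{\pi}{2\sqrt{4+\epsilon^2}}]$ one checks $\epsilon t\le\frac{\pi\epsilon}{2\sqrt{4+\epsilon^2}}<\frac{\pi}{2}$, so both angles lie where $\cos$ is decreasing; combining $\theta^{A}_t\le\epsilon t=\theta^{B}_t$ with monotonicity yields $z^{A}_t=\cos\theta^{A}_t\ge\cos(\epsilon t)=z^{B}_t$, which is the claim.

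The hard part is making the comparison rigorous near $t=0$, where $\theta=\arccos z$ is singular and the right-hand side $-\epsilon\sqrt{1-z^2}$ fails to be Lipschitz; indeed the scalar problem $\dot v=-\epsilon\sqrt{1-v^2}$, $v(0)=1$, admits both $v\equiv 1$ and $v=\cos(\epsilon t)$. I would resolve this either by invoking the comparison theorem in its minimal-solution form (the minimal solution is $\cos(\epsilon t)$, so $\dot z^{A}\ge -\epsilon\sqrt{1-(z^{A})^2}$ forces $z^{A}_t\ge\cos(\epsilon t)$), or by a hands-on first-crossing argument: if $z^{A}_t<\cos(\epsilon t)$ somewhere, take the last preceding time $t_1$ with equality; on the ensuing subinterval $z^{A}\in(-1,1)$, so $\theta^{A}$ is differentiable with $\dot\theta^{A}\le\epsilon$ and $\theta^{A}_{t_1}=\epsilon t_1$, forcing $\theta^{A}_t\le\epsilon t$ and hence $z^{A}_t\ge\cos(\epsilon t)$ there, a contradiction. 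One must also note that the trajectory leaves the pole immediately, since $z^{A}\equiv 1$ would force $x^{A}=y^{A}\equiv 0$ and then $(\dot x^{A},\dot y^{A})=(\epsilon\cos\phi_0,-\epsilon\sin\phi_0)\ne(0,0)$; together with $\dot\theta^{A}\le\epsilon$ this keeps $\theta^{A}<\pi/2$ on the whole interval, so the poles are never revisited and the continuation is clean.

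I would finally remark that this argument is insensitive both to the time dependence of $\omega(t)$ and to the value of $\phi_0$, precisely because $c_z$ drops out of $(\mathbf{c}\times\mathbf{r})_z$; it even delivers $z^{A}_t\ge z^{B}_t$ on the larger range $\epsilon t\le\pi$. The stated horizon $\frac{\pi}{2\sqrt{4+\epsilon^2}}$ is therefore more conservative than this particular estimate requires, and presumably reflects what is needed when Lemma \ref{lemmaB} is combined with Lemma \ref{lemma2} to bound the probability of failure.
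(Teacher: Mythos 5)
Your proof is correct, and it takes a genuinely different---and in fact stronger---route than the paper's. The paper treats $\omega(t)$ as an adversarial control and invokes Pontryagin's minimum principle: it argues the worst case is the bang-bang choice $\omega^{*}(t)\equiv-\omega$ (after a separate contradiction argument excluding singular arcs), writes down the explicit solution (\ref{Asolution}) for constant $\bar{\omega}$, and then proves $F(t)=z^{A}_{t}-z^{B}_{t}\geq 0$ by two rounds of differentiation and monotonicity; the horizon $t\leq\frac{\pi}{2\sqrt{4+\epsilon^{2}}}$ is exactly what keeps $z^{A}_{t}$ monotone and the switching function in (\ref{singularcondition}) sign-definite, and it tacitly uses $(1+\bar{\omega})^{2}\leq 4$, i.e.\ $\omega\leq 1$. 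You instead notice that $c_{z}$ never enters $\dot{z}$, bound $\dot{z}^{A}_{t}\geq-\epsilon\sqrt{1-(z^{A}_{t})^{2}}$ by Cauchy--Schwarz plus the sphere constraint, and compare with the minimal solution $\cos(\epsilon t)$ of the resulting non-Lipschitz scalar problem; you also correctly identify and repair the one delicate point---non-uniqueness at the pole---via the minimal-solution form of the comparison theorem or a first-crossing argument (a subtlety the paper never faces because it works with closed-form solutions). What your argument buys: it is much shorter, requires no bound on $\omega(t)$ at all (any measurable longitudinal disturbance works), dispenses with the singular-arc analysis, and establishes the conclusion on the larger window $\epsilon t\leq\pi$, showing the stated horizon is not intrinsic to the lemma. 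What the paper's route buys: it exhibits the extremal disturbance explicitly as bang-bang, and that PMP machinery is reused verbatim in the proof of Theorem \ref{CD_Theorem2} at the touching time $t^{*}$, so it integrates with the later arguments. One small correction to your closing speculation: the horizon $\frac{\pi}{2\sqrt{4+\epsilon^{2}}}$ is dictated by the needs of the paper's own optimal-control proof (monotonicity of $z^{A}_{t}$ and sign-definiteness of the switching function), not by the combination with Lemma \ref{lemma2}.
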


\begin{proof}
For the system with Hamiltonian $H^{A}=[1+\omega
(t)]I_{z}+\epsilon\cos\phi_{0}I_{y}+\epsilon\sin\phi_{0}I_{x}$,
using (\ref{blochEq}) and (\ref{Case1Eq}), we obtain the
following state equations
\begin{equation}\label{lemmaB1}
\left(%
\begin{array}{c}
  \dot{x}_{t}^{A} \\
  \dot{y}_{t}^{A} \\
  \dot{z}_{t}^{A} \\
\end{array}%
\right)
=\left(%
\begin{array}{ccc}
  0 & -[1+\omega(t)] & \epsilon\cos\phi_{0} \\
  1+\omega(t) & 0  & -\epsilon\sin\phi_{0} \\
  -\epsilon\cos\phi_{0} & \epsilon\sin\phi_{0} & 0 \\
\end{array}%
\right) \left(%
\begin{array}{c}
  x_{t}^{A} \\
  y_{t}^{A} \\
  z_{t}^{A} \\
\end{array}%
\right),
\end{equation}
where $(x_{0}^{A}, y_{0}^{A}, z_{0}^{A})=(0, 0, 1)$.

Consider $\omega(t)$ as a control input and select the
performance measure as
\begin{equation}
J(\omega)=z_{f} .
\end{equation}
We introduce the Lagrange multiplier vector
$\lambda(t)=(\lambda_{1}(t), \lambda_{2}(t), \lambda_{3}(t))^{T}$
and obtain the corresponding Hamiltonian function as follows:
\begin{equation}
\mathbb{H}({\mathbf{r}(t),\omega(t),\mathbf{\lambda}(t),t})\equiv
\lambda^{T}(t)\left(%
\begin{array}{ccc}
  0 & -[1+\omega(t)] & \epsilon\cos\phi_{0} \\
  1+\omega(t) & 0  & -\epsilon\sin\phi_{0} \\
  -\epsilon\cos\phi_{0} & \epsilon\sin\phi_{0} & 0 \\
\end{array}%
\right) \left(%
\begin{array}{c}
  x_{t} \\
  y_{t} \\
  z_{t} \\
\end{array}%
\right),
\end{equation}
where $\mathbf{r}(t)=(x_{t}, y_{t}, z_{t})$. That is
\begin{equation}
\begin{array}{c}
\mathbb{H}({\mathbf{r}(t),\omega(t),\mathbf{\lambda}(t),t}) \ \ \ \ \ \ \ \  \ \ \ \  \ \
\ \ \ \ \ \ \ \ \ \ \ \ \ \ \ \ \ \ \ \ \ \ \ \ \ \ \ \ \ \ \ \ \ \ \ \ \ \ \ \ \ \ \ \ \ \
\ \ \ \ \ \ \ \ \ \ \ \ \ \ \ \ \ \ \ \ \ \ \ \ \ \ \ \ \ \ \ \\
=[1+\omega(t)](\lambda_{2}(t)x_{t}-\lambda_{1}(t)y_{t})+\epsilon
\cos\phi_{0}(\lambda_{1}(t)z_{t}-\lambda_{3}(t)x_{t})-\epsilon
\sin\phi_{0}(\lambda_{2}(t)z_{t}-\lambda_{3}(t)y_{t}).
\end{array}
\end{equation}
According to Pontryagin's minimum principle \cite{Kirk 1970}, a
necessary condition for $\omega^{*}(t)$ to minimize
$J(\omega)$ is
\begin{equation}
\mathbb{H}({\mathbf{r}^{*}(t),\omega^{*}(t),\mathbf{\lambda}^{*}(t),t})\leq
\mathbb{H}({\mathbf{r}^{*}(t),\omega(t),\mathbf{\lambda}^{*}(t),t})
.
\end{equation}
Hence, if we do not consider singular cases (i.e.,
$\lambda_{2}(t)x_{t}-\lambda_{1}(t)y_{t}\equiv 0$), the optimal
control $\omega^{*}(t)$ should be chosen as follows:
\begin{equation}\label{bangbang}
\omega^{*}(t)=-\omega
\text{sgn}(\lambda_{2}(t)x_{t}-\lambda_{1}(t)y_{t}).
\end{equation}
That is, the optimal control strategy for $\omega(t)$ is bang-bang
control; i.e., $\omega^{*}(t)=\bar{\omega}=+\omega \ \text{or}
-\omega$. Now we consider
$H^{A}=(1+\bar{\omega})I_{z}+\epsilon\cos\phi_{0}I_{y}+\epsilon\sin\phi_{0}I_{x}$,
which leads to the following state equations
\begin{equation}
\left(%
\begin{array}{c}
  \dot{x}_{t}^{A} \\
  \dot{y}_{t}^{A} \\
  \dot{z}_{t}^{A} \\
\end{array}%
\right)
=\left(%
\begin{array}{ccc}
  0 & -(1+\bar{\omega}) & \epsilon\cos\phi_{0} \\
  1+\bar{\omega} & 0  & -\epsilon\sin\phi_{0} \\
  -\epsilon\cos\phi_{0} & \epsilon\sin\phi_{0} & 0 \\
\end{array}%
\right) \left(%
\begin{array}{c}
  x_{t}^{A} \\
  y_{t}^{A} \\
  z_{t}^{A} \\
\end{array}%
\right),
\end{equation}
where $(x_{0}^{A}, y_{0}^{A}, z_{0}^{A})=(0, 0, 1)$. The
corresponding solution is
\begin{equation}\label{Asolution}
\left(%
\begin{array}{c}
  x_{t}^{A} \\
  y_{t}^{A} \\
  z_{t}^{A} \\
\end{array}%
\right)
=\\
\left(%
\begin{array}{c}
  \frac{\epsilon\cos\phi_{0}}{\sqrt{(1+\bar{\omega})^{2}+\epsilon^{2}}}\sin\upsilon t
 -\frac{(1+\bar{\omega})\epsilon\sin\phi_{0}}{(1+\bar{\omega})^{2}+\epsilon^{2}}\cos\upsilon t
  +\frac{(1+\bar{\omega})\epsilon\sin\phi_{0}}{(1+\bar{\omega})^{2}+\epsilon^{2}} \\
  -\frac{\epsilon\sin\phi_{0}}{\sqrt{(1+\bar{\omega})^{2}+\epsilon^{2}}}\sin\upsilon
  t-\frac{(1+\bar{\omega})\epsilon\cos\phi_{0}}{(1+\bar{\omega})^{2}+\epsilon^{2}}\cos\upsilon
  t+\frac{(1+\bar{\omega})\epsilon\cos\phi_{0}}{(1+\bar{\omega})^{2}+\epsilon^{2}} \\
  \frac{\epsilon^{2}}{(1+\bar{\omega})^{2}+\epsilon^{2}}\cos\upsilon t+\frac{(1+\bar{\omega})^{2}}{(1+\bar{\omega})^{2}+\epsilon^{2}} \\
\end{array}%
\right),
\end{equation}
where $\upsilon=\sqrt{(1+\bar{\omega})^{2}+\epsilon^{2}}$.
From (\ref{Asolution}), we know that $z_t$ is a monotonically
decreasing function in $t$ when $t\in [0, \frac{\pi}{2
\sqrt{4+\epsilon^{2}}}]$. Hence, we only consider the case $t \in
[0,t_{f}]$ where $t_{f}\in [0, \frac{\pi}{2
\sqrt{4+\epsilon^{2}}}]$.

Now consider the optimal control problem with a fixed final time
$t_{f}$ and a free final state $\mathbf{r}_{f}=(x_{f},y_{f},z_{f})$.
According to Pontryagin's minimum principle,
$\lambda^{*}(t_f)=\frac{\partial}{\partial
\mathbf{r}}\mathbf{r}^{*}(t_{f})$, and it is straightforward
to verify that
$(\lambda_{1}(t_f),\lambda_{2}(t_f),\lambda_{3}(t_f))=(0,0,1)$. Now
consider another necessary condition
$\dot{\lambda}(t)=-\frac{\partial
\mathbb{H}({\mathbf{r}(t),\epsilon(t),\mathbf{\lambda}(t),t})}{\partial
\mathbf{r}}$ which leads to the following relationships:

\begin{equation}\label{eq24b}
\dot{\mathbf{\lambda}}(t)=\left(%
\begin{array}{c}
  \dot{\lambda}_{1}(t) \\
  \dot{\lambda}_{2}(t) \\
  \dot{\lambda}_{3}(t) \\
\end{array}%
\right)
=\left(%
\begin{array}{ccc}
  0 & -(1+\bar{\omega}) & \epsilon\cos\phi_{0} \\
  1+\bar{\omega} & 0  & -\epsilon\sin\phi_{0} \\
  -\epsilon\cos\phi_{0} & \epsilon\sin\phi_{0} & 0 \\
\end{array}%
\right) \left(%
\begin{array}{c}
  \lambda_{1}(t) \\
  \lambda_{2}(t) \\
  \lambda_{3}(t) \\
\end{array}%
\right),
\end{equation}
where
$(\lambda_{1}(t_f),\lambda_{2}(t_f),\lambda_{3}(t_f))=(0,0,1)$. The
corresponding solution is
\begin{equation}
\left(%
\begin{array}{c}
  \lambda_{1}(t) \\
  \lambda_{2}(t) \\
  \lambda_{3}(t) \\
\end{array}%
\right)
=\left(%
\begin{array}{c}
  -\frac{\epsilon\cos\phi_{0}}{\sqrt{(1+\bar{\omega})^{2}+\epsilon^{2}}}\sin\upsilon(t_{f}-t)
 -\frac{(1+\bar{\omega})\epsilon\sin\phi_{0}}{(1+\bar{\omega})^{2}+\epsilon^{2}}\cos\upsilon(t_{f}-t)
  +\frac{(1+\bar{\omega})\epsilon\sin\phi_{0}}{(1+\bar{\omega})^{2}+\epsilon^{2}} \\
  \frac{\epsilon\sin\phi_{0}}{\sqrt{(1+\bar{\omega})^{2}+\epsilon^{2}}}\sin\upsilon(t_{f}-t)
  -\frac{(1+\bar{\omega})\epsilon\cos\phi_{0}}{(1+\bar{\omega})^{2}+\epsilon^{2}}\cos\upsilon(t_{f}-t)
  +\frac{(1+\bar{\omega})\epsilon\cos\phi_{0}}{(1+\bar{\omega})^{2}+\epsilon^{2}} \\
  \frac{\epsilon^{2}}{(1+\bar{\omega})^{2}+\epsilon^{2}}\cos\upsilon(t_{f}-t)+\frac{(1+\bar{\omega})^{2}}{(1+\bar{\omega})^{2}+\epsilon^{2}} \\
\end{array}%
\right).
\end{equation}
We obtain
\begin{equation}\label{singularcondition}
\lambda_{2}(t)x_{t}-\lambda_{1}(t)y_{t}
=\frac{\epsilon^{2}(1+\bar{\omega})}{\upsilon^{3/2}}[\sin\upsilon
t+\sin\upsilon(t_{f}-t)-\sin\upsilon t_f].
\end{equation}
It is easy to show that the quantity
$(\lambda_{2}(t)x_{t}-\lambda_{1}(t)y_{t})\geq 0$ occurring in
(\ref{bangbang}) does not change sign when $t_{f}\in [0,
\frac{\pi}{2 \sqrt{4+\epsilon^{2}}}]$ and $t \in [0,t_{f}]$. Hence,
the optimal control is $\delta^{*}(t)=\bar{\omega}=-\omega$.

We now exclude the possibility that there exists a singular
case. Suppose that there exists a singular interval $[t_{0}, t_{1}]$
(where $t_{0}\geq 0$ and we assume that $[t_{0}, t_{1}]$ is the
first singular interval) such that when $t\in [t_{0}, t_{1}]$
\begin{equation}\label{singular1Eq}
h(t)=\lambda_{2}(t)x_{t}-\lambda_{1}(t)y_{t}\equiv 0.
\end{equation}
We also have the following relationship
\begin{equation}\label{singular2Eq}
\ddot{h}(t)=\lambda_{3}(t)x_{t}-\lambda_{1}(t)z_{t}\equiv 0
\end{equation}
where we have used (\ref{lemmaB1}) and the following costate
equation
\begin{equation}
\dot{\mathbf{\lambda}}(t)=\left(%
\begin{array}{c}
  \dot{\lambda}_{1}(t) \\
  \dot{\lambda}_{2}(t) \\
  \dot{\lambda}_{3}(t) \\
\end{array}%
\right)
=\left(%
\begin{array}{ccc}
  0 & -[1+\omega(t)] & \epsilon\cos\phi_{0} \\
  1+\omega(t) & 0  & -\epsilon\sin\phi_{0} \\
  -\epsilon\cos\phi_{0} & \epsilon\sin\phi_{0} & 0 \\
\end{array}%
\right) \left(%
\begin{array}{c}
  \lambda_{1}(t) \\
  \lambda_{2}(t) \\
  \lambda_{3}(t) \\
\end{array}%
\right).
\end{equation}

If $t_{0}=0$, we have $(x_{0}, y_{0}, z_{0})=(0, 0, 1)$. By the
principle of optimality \cite{Kirk 1970}, we may consider the case
$t_f=t_1$. Using (\ref{singular1Eq}), (\ref{singular2Eq}) and
$(\lambda_{1}(t_1),\lambda_{2}(t_1),\lambda_{3}(t_1))=(0,0,1)$, we
have $x_{t_1}=0$ and $y_{t_1}=0$. Using the relationship of
$x^{2}_{t}+y^{2}_{t}+z_{t}^{2}=1$, we obtain $z_{t_{1}}=1$ or $-1$.
If $z_{t_{1}}=1$, the initial and final states are the same
state $|0\rangle$. However, if we use the control
$\omega(t)=\bar{\omega}$, from (\ref{Asolution}) we have
$z_{t_1}(\bar{\omega})=\frac{\epsilon^2}{(1+\bar\omega)^{2}+\epsilon^2}\cos\upsilon
t_1+\frac{(1+\bar\omega)^{2}}{(1+\bar\omega)^{2}+\epsilon^2}<z_{t_1}=1$.
Hence, this contradicts the fact that we are considering the optimal
case $\min z_f$. If $z_{t_{1}}=-1$, there exists $0< \tilde{t}_1<
t_1$ such that $z_{\tilde{t}_1}=0$. By the principle of optimality
\cite{Kirk 1970}, we may consider the case $t_f=\tilde{t}_1$. From
the two equations (\ref{singular1Eq}) and (\ref{singular2Eq}), we
know that $z^{2}_{\tilde{t}_1}=1$ which contradicts
$z_{\tilde{t}_1}=0$. Hence, no singular condition can exist if
$t_{0}=0$.

If $t_{0}> 0$, using (\ref{bangbang}) we must select
$\omega(t)=\bar{\omega}$ when $t\in [0, t_{0}]$. From
(\ref{singularcondition}), we know that there exist no $t_{0}\in (0,
t_{f})$ satisfying
$\lambda_{2}(t_{0})x_{t_{0}}-\lambda_{1}(t_{0})y_{t_{0}}=0$. Hence,
there exist no singular cases for our problem. From the previous
analysis, $\omega(t)=-\omega$ is the optimal control when $t\in [0,
\frac{\pi}{2 \sqrt{4+\epsilon^{2}}}]$.

For the system with Hamiltonian
$H^{B}=\epsilon\cos\phi_{0}I_{y}+\epsilon\sin\phi_{0}I_{x}$,
using (\ref{blochEq}) and (\ref{Case1Eq}), we obtain the
following state equations
\begin{equation}\label{HB}
\left(%
\begin{array}{c}
  \dot{x}_{t}^{B} \\
  \dot{y}_{t}^{B} \\
  \dot{z}_{t}^{B} \\
\end{array}%
\right)
=\left(%
\begin{array}{ccc}
  0 & 0 & \epsilon\cos\phi_{0} \\
  0 & 0  & -\epsilon\sin\phi_{0} \\
  -\epsilon\cos\phi_{0} & \epsilon\sin\phi_{0} & 0 \\
\end{array}%
\right) \left(%
\begin{array}{c}
  x_{t}^{B} \\
  y_{t}^{B} \\
  z_{t}^{B} \\
\end{array}%
\right),
\end{equation}
where $(x_{0}^{B}, y_{0}^{B}, z_{0}^{B})=(0, 0, 1)$. The
corresponding solution is
\begin{equation}\label{Hbsolution}
\left(%
\begin{array}{c}
  x_{t}^{B} \\
  y_{t}^{B} \\
  z_{t}^{B} \\
\end{array}%
\right)
=\\
\left(%
\begin{array}{c}
  \cos\phi_{0}\sin\epsilon t \\
  -\sin\phi_{0}\sin\epsilon t \\
  \cos\epsilon t \\
\end{array}%
\right).
\end{equation}

We define $F(t)$ and $f(t)$ as follows:
\begin{equation}
F(t)=z_{t}^{A}-z_{t}^{B}=\frac{\epsilon^{2}}{(1-\omega)^{2}+\epsilon^{2}}\cos\upsilon
t+\frac{(1-\omega)^{2}}{(1-\omega)^{2}+\epsilon^{2}}-\cos \epsilon
t,
\end{equation}
\begin{equation}\label{eq3}
f(t)=\dot{F}(t)=-\frac{\epsilon^{2}}{\sqrt{(1-\omega)^{2}+\epsilon^{2}}}\sin\upsilon
t+\epsilon\sin \epsilon t .
\end{equation}
Now, consider $t\in [0, \frac{\pi}{2\sqrt{4+\epsilon^{2}}}]$ to
obtain
\begin{equation}\label{eq5}
\dot{f}(t)=\epsilon^{2}(\cos\epsilon t-\cos\upsilon t)\geq 0.
\end{equation}
It is clear that $\dot{f}(t)=0$ only when $t=0$. Hence $f(t)$ is a
monotonically increasing function and $$\min_t{f(t)}=f(0)=0 .$$

Hence, we have
\begin{equation}\label{eq6}
f(t)\geq 0 .
\end{equation}
From this result, it is clear that $F(t)$ is a monotonically increasing
function and
$$\min_t{F(t)}=F(0)=0 .$$
Hence $F(t)\geq 0$ when $t\in [0,
\frac{\pi}{2\sqrt{4+\epsilon^{2}}}]$. Therefore,  we can conclude that
$z^{A}_{t}\geq z^{B}_{t}$ for arbitrary $t\in [0,
\frac{\pi}{2\sqrt{4+\epsilon^{2}}}]$.
\end{proof}

We now present another lemma.

\begin{lemma}\label{lemma2}
For a single qubit with initial state
$(x_{0},y_{0},z_{0})=(0, 0, 1)$, suppose that the system evolves to
$(x_{t},y_{t},z_{t})$ under the action of $H=\epsilon(\cos\phi
I_{y}+\sin\phi I_{x})$ ($\phi$ is a constant). Then, $z_{t}$ is
independent of $\phi$.
\end{lemma}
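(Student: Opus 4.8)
The plan is to collapse the three coupled Bloch equations into a single closed second-order ODE for $z_t$ alone, and then observe that both the coefficients and the initial data of that ODE are free of $\phi$. First I would write out the equations of motion. Applying (\ref{blochEq}) and (\ref{Case1Eq}) with $H=\epsilon(\cos\phi\, I_y+\sin\phi\, I_x)$ gives $\dot{x}_t=\epsilon\cos\phi\, z_t$, $\dot{y}_t=-\epsilon\sin\phi\, z_t$ and $\dot{z}_t=-\epsilon\cos\phi\, x_t+\epsilon\sin\phi\, y_t$, subject to $(x_0,y_0,z_0)=(0,0,1)$; this is exactly the system (\ref{HB}) with $\phi_0$ replaced by $\phi$.

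The key step is to differentiate the equation for $\dot{z}_t$ once more and substitute the expressions for $\dot{x}_t$ and $\dot{y}_t$:
\begin{equation}
\ddot{z}_t=-\epsilon\cos\phi\,\dot{x}_t+\epsilon\sin\phi\,\dot{y}_t=-\epsilon^{2}(\cos^{2}\phi+\sin^{2}\phi)z_t=-\epsilon^{2}z_t .
\end{equation}
The angle $\phi$ enters only through $\cos^{2}\phi+\sin^{2}\phi=1$, so it drops out entirely and $z_t$ obeys the $\phi$-free equation $\ddot{z}_t=-\epsilon^{2}z_t$. The accompanying initial data are likewise independent of $\phi$: we have $z_0=1$, and since $x_0=y_0=0$ it follows that $\dot{z}_0=-\epsilon\cos\phi\cdot 0+\epsilon\sin\phi\cdot 0=0$.

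I would then conclude by uniqueness of solutions to this linear initial value problem: the data contain no $\phi$, so the unique solution $z_t=\cos(\epsilon t)$ cannot depend on $\phi$, which is precisely the claim (and agrees with (\ref{Hbsolution})). I do not expect any genuine obstacle here, since the entire content reduces to the cancellation $\cos^{2}\phi+\sin^{2}\phi=1$ together with the vanishing of $\dot{z}_0$. As an alternative, and to explain conceptually why the explicit computation is forced to be $\phi$-independent, I would note the underlying symmetry: a rotation $e^{-i\theta I_z}$ about the $z$-axis conjugates $H(\phi)$ into $H(\phi')$ for a suitable $\theta$, while leaving both the initial state $|0\rangle\langle 0|$ and the measured observable $\sigma_z$ invariant; hence $z_t=\mathrm{tr}(\rho_t\sigma_z)$ must take the same value for every $\phi$.
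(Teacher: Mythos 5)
Your proof is correct, but it reaches the conclusion by a different route than the paper. The paper's proof is a one-liner: it invokes the explicit solution (\ref{Hbsolution}) of the full $3\times 3$ Bloch system, already integrated in the course of proving Lemma \ref{lemmaB}, and simply reads off $z_{t}=\cos\epsilon t$. You instead decouple $z_{t}$ by differentiating once, obtaining the $\phi$-free scalar problem $\ddot{z}_{t}=-\epsilon^{2}z_{t}$ with $z_{0}=1$, $\dot{z}_{0}=0$, and conclude by uniqueness for the linear initial value problem; the angle $\phi$ cancels through $\cos^{2}\phi+\sin^{2}\phi=1$ before any integration is performed. What this buys you is independence from the explicit formula: your argument would stand even if the full system had not been solved, at the cost of not producing $x_{t}$ and $y_{t}$ (which the paper needs elsewhere in Lemma \ref{lemmaB}, e.g.\ for $F(t)$, but which are irrelevant to the lemma as stated). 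Your closing symmetry remark is also sound and is arguably the cleanest explanation: conjugation by $e^{-i\theta I_{z}}$ rotates $\cos\phi\, I_{y}+\sin\phi\, I_{x}$ into the corresponding Hamiltonian with shifted phase while fixing both the initial state $|0\rangle\langle 0|$ and the observable $\sigma_{z}$, so $z_{t}=\text{tr}(\rho_{t}\sigma_{z})$ cannot depend on $\phi$; unlike the computation, this argument generalizes immediately to any initial state diagonal in the $\sigma_{z}$ basis and makes the $\phi$-independence manifest rather than coincidental.
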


\begin{proof}
For $H=\omega(\sin\phi I_{x}+\cos\phi I_{y})$, from
(\ref{Hbsolution}), we have
$$z_{t}=\cos\epsilon t .$$
It is clear that $z_{t}$ is independent of  $\phi$.
\end{proof}

\begin{remark}
Since $z_{t}$ is independent of $\phi$, it is enough to consider
a special case $\phi=\frac{\pi}{2}$ when analyzing $z_{t}$ under
$H=\epsilon (\cos\phi I_{y}+\sin\phi I_{x})$.
\end{remark}

Now we can prove Theorem \ref{CD_Theorem2}.

\begin{proof}
For a single qubit, assume that the state at time $t$ is $\rho_{t}$. If we make a measurement with
the operator $\sigma_{z}$, the probability $p$ that the state
will collapse into $|1\rangle$ (the probability of failure) is
\begin{equation}\label{fidelity}
p=\langle 1|\rho_{t}|1\rangle=\frac{1-z_{t}}{2} .
\end{equation}
For a closed single qubit system, its state $|\psi\rangle$ can be represented as
\begin{equation}\label{superposition}
|\psi\rangle=\cos{\frac{\theta}{2}}|0\rangle+e^{i\varphi}\sin{\frac{\theta}{2}}|1\rangle
,
\end{equation}
where its bloch vector
corresponds to $(x, y, z)=(\sin\theta \cos\varphi,
\sin\theta \sin\varphi, \cos\theta)$, $\theta \in [0,\pi]$, $\varphi
\in [0, 2\pi]$.

For $H^{A}=[1+\omega(t)]I_{z}+\epsilon_{x}(t) I_{x}+\epsilon_{y}(t)
I_{y}$, using $\dot{\rho}=-i[H^{A}, \rho]$ and (\ref{blochEq}), we
obtain the following state equations
\begin{equation}
\left(%
\begin{array}{c}
  \dot{x}_{t}^{A} \\
  \dot{y}_{t}^{A} \\
  \dot{z}_{t}^{A} \\
\end{array}%
\right)
=\left(%
\begin{array}{ccc}
  0 & -[1+\omega(t)] & \epsilon_{y}(t) \\
  1+\omega(t) & 0  & -\epsilon_{x}(t) \\
  -\epsilon_{y}(t) & \epsilon_{x}(t) & 0 \\
\end{array}%
\right) \left(%
\begin{array}{c}
  x_{t}^{A} \\
  y_{t}^{A} \\
  z_{t}^{A} \\
\end{array}%
\right).
\end{equation}
Define $\epsilon(t)=\sqrt{\epsilon_{x}^{2}(t)+\epsilon_{y}^{2}(t)}$
and $\epsilon_{x}(t)=\epsilon(t)\sin\phi_{t}$ ,
$\epsilon_{y}(t)=\epsilon(t)\cos\phi_{t}$. This leads to the
following equation
\begin{equation}\label{twouncertainties}
\left(%
\begin{array}{c}
  \dot{x}_{t}^{A} \\
  \dot{y}_{t}^{A} \\
  \dot{z}_{t}^{A} \\
\end{array}%
\right)=
\left(%
\begin{array}{ccc}
  0 & -[1+\omega(t)] & \epsilon(t) \cos\phi_{t} \\
  1+\omega(t) & 0  & -\epsilon(t) \sin\phi_{t} \\
  -\epsilon(t) \cos\phi_{t} & \epsilon(t) \sin\phi_{t} & 0 \\
\end{array}%
\right) \left(%
\begin{array}{c}
  x_{t}^{A} \\
  y_{t}^{A} \\
  z_{t}^{A} \\
\end{array}%
\right).
\end{equation}
For
$H^{B}=\epsilon I_x$,
we have
\begin{equation}\label{eq20}
\left(%
\begin{array}{c}
  \dot{x}_{t}^{B} \\
  \dot{y}_{t}^{B} \\
  \dot{z}_{t}^{B} \\
\end{array}%
\right)
=\left(%
\begin{array}{ccc}
  0 & 0 & 0 \\
  0 & 0  & -\epsilon \\
  0 & \epsilon & 0 \\
\end{array}%
\right) \left(%
\begin{array}{c}
  x_{t}^{B} \\
  y_{t}^{B} \\
  z_{t}^{B} \\
\end{array}%
\right).
\end{equation}
When $(x_{0}^{A}, y_{0}^{A},
z_{0}^{A})=(x_{0}^{B}, y_{0}^{B},
z_{0}^{B})=(0, 0, 1)$, for $\Delta t\rightarrow 0$, we have from Lemma \ref{lemmaB} and Lemma
\ref{lemma2}
\begin{equation}
z_{\Delta t}^{A}\geq z_{\Delta t}^{B}.
\end{equation}
We will now prove that the relationship $z_{\Delta t}^{A}\geq z_{\Delta t}^{B}$ ($\Delta t\rightarrow 0$) is also true for $z_{0}^{A}=z_{0}^{B}=\cos\theta_{0}$ (where $\theta_{0}\in (0, \pi)$).
We assume that there exist $\tilde{t}\in(0, \Delta t]$ such that
\begin{equation}
z_{\tilde{t}}^{A}< z_{\tilde{t}}^{B}.
\end{equation}
Define $f(t)=z_{t}^{A}-z_{t}^{B}$. Since $f(t)$ is continuous in
$t$ and $f(0)=0$, there exists a time $t^{*}=\sup \{t|0 \leq
t<\tilde{t}, f(t)=0\}$ satisfying $f(t^{*})=0$ and $f(t)<0$ for $t\in (t^{*}, \tilde{t}]$. Hence
\begin{equation}\label{FtDot}
\dot{f}(t)|_{t=t^{*}}\leq 0.
\end{equation}

Let $z_{t^{*}}^{A}=z_{t^{*}}^{B}=\cos\theta^{*}$. We can assume $(x_{t^{*}}^{A}, y_{t^{*}}^{A},
z_{t^{*}}^{A})=(\sin\theta^{*}\cos\varphi^{*},
\sin\theta^{*}\sin\varphi^{*}, \cos\theta^{*})$ and $(x_{t^{*}}^{B}, y_{t^{*}}^{B}, z_{t^{*}}^{B})=(0, -\sin\theta^{*}, \cos\theta^{*})$ (where $\varphi^{*} \in
[0, 2\pi]$).
Define $N(t)=-\epsilon(t)\sin\theta^{*}\cos(\phi_{t}+\varphi^{*})$. From
(\ref{twouncertainties}) and (\ref{eq20}), we have
\begin{equation}\label{differential}
\dot{f}(t)|_{t=t^{*}}=\dot{z}_t^{A}|_{t=t^{*}}-\dot{z}^{B}_t|_{t=t^{*}}=\lim_{t\rightarrow t^{*}}N(t)+\epsilon\sin\theta^{*}.
\end{equation}
For arbitrary $t$, it is clear that
\begin{equation} N(t)\geq-\epsilon\sin\theta^{*}. \end{equation}
When $N(t)> -\epsilon\sin\theta^{*}$, $\dot{f}(t)|_{t=t^{*}}> 0$, which contradicts (\ref{FtDot}). When $ N(t)=-\epsilon\sin\theta^{*}$, since $\sin\theta^{*}\neq 0$, we have $\epsilon(t^{*})=\epsilon$ and $\phi_{t^{*}}=2\pi-\varphi^{*}$. Using Pontryagin's minimum principle \cite{Kirk 1970} and a similar argument in Lemma \ref{lemmaB} and Lemma \ref{lemma2}, we can prove $z^{A}_{t^{*}+\Delta t}\geq z^{B}_{t^{*}+\Delta t}$. Hence we can conclude that for $z_{0}^{A}=z_{0}^{B}=\cos\theta_{0}$ (where $\theta_{0}\in[0, \pi)$) and $\Delta t\rightarrow 0$,
\begin{equation}
z^{A}_{\Delta t}\geq z^{B}_{\Delta t}.
\end{equation}

From (\ref{eq20}), we know that $z^{B}_{t}=\cos(\theta_{0}+\epsilon
t)$. When $0< t< \frac{\pi-\theta_{0}}{\epsilon}$, $z^{B}_{t}$
decreases monotonically in $t$. We now define $g(t)=z^{A}_{t}-
z^{B}_{t}$ and assume that there exist $t=t_{1}\in [0,
\frac{\pi-\theta_{0}}{\epsilon})$ such that $z^{A}_{t_{1}}<
z^{B}_{t_{1}}$. That is, $g(t_{1})<0$. Since $g(t)$ is continuous in
$t$ and $g(0)=0$, there exists a time $t^{*}=\sup \{t|0 \leq
t<t_{1}, g(t)=0\}$ satisfying $g(t)<0$ for $t\in (t^{*}, t_{1}]$.
However, we have established that for any $z_{t}^{A}=z_{t}^{B}$ and
$\Delta t \rightarrow 0$, $z_{t+\Delta t}^{A}\geq z_{t+\Delta
t}^{B}$, which contradicts $g(t)<0$ for $t\in (t^{*}, t_{1}]$.
Hence, we have the following relationship for $t\in [0,
\frac{\pi-\theta_{0}}{\epsilon})$
\begin{equation}\label{ZaZbRelationship}
z^{A}_{t}\geq z^{B}_{t}.
\end{equation}
From (\ref{fidelity}), it is clear that the probabilities of failure
satisfy $p^{A}_{t}=\frac{1-z^{A}_{t}}{2}\leq
p^{B}_{t}=\frac{1-z^{B}_{t}}{2}$. That is, the probability of
failure $p_{t}^{A}$ is not greater than $p^{B}_{t}$ for $t\in [0,
\frac{\pi-\theta_{0}}{\epsilon})$.

Since $z_{t}^{B}=\cos(\theta_{0}+\epsilon t)$, we have $\Delta
z^{B}_{\beta T_{c}}=\cos\theta_{0}-\cos(\theta_{0}+\epsilon \beta T)$,
where
\begin{equation}
T_{c}=\frac{\arccos(1-2p_{0})}{\epsilon}.
\end{equation}
When $|\langle \psi(0)|1\rangle|^{2}\leq \alpha p_{0}$, using the previous argument, we have
$$z^{A}_{\beta T_{c}}\geq 1-2\alpha p_{0}+\cos(\theta_{0}+\epsilon \beta T_{c})-\cos\theta_{0}=M.$$
Now let $$p=\frac{1-z^{A}_{\beta T_{c}}}{2}\leq \frac{1-M}{2}\leq p_{0}
.$$ Using the fact $\theta_{0}=\arccos(1-2\alpha p_{0})$, we have
the following relationship
\begin{equation}
\alpha \leq\frac{1-\cos[(1-\beta)\arccos(1-2p_{0})]}{2p_{0}} .
\end{equation}
\end{proof}

\subsection{Proof of Theorem \ref{AC_Thereom1}}
\begin{proof}
For the open qubit system subject to (\ref{AD_masterEq}),
when $H(t)=[1+\omega(t)]I_{z}+\epsilon_{x}(t)I_{x}+\epsilon_{y}(t)I_{y}$ ($\sqrt{\epsilon_{x}^{2}(t)+\epsilon_{y}^{2}(t)}\leq
\epsilon$, $\epsilon >0$, $|\omega(t)|\leq \omega$ and $\omega\geq 0$), $\gamma_{t}=\gamma_{0}+\delta \gamma_{t}$ ($|\delta \gamma_{t}|\leq \gamma$), using (\ref{blochEq}), we have
\begin{equation}\label{AD_stateEq}
\left(%
\begin{array}{c}
  \dot{x}_{t} \\
  \dot{y}_{t} \\
  \dot{z}_{t} \\
\end{array}%
\right)
=\left(%
\begin{array}{ccc}
  -\frac{1}{2}(\gamma_{0}+\delta \gamma_{t}) & -(1+\omega(t)) & \epsilon_{y}(t) \\
  1+\omega(t) & -\frac{1}{2}(\gamma_{0}+\delta \gamma_{t})  & -\epsilon_{x}(t) \\
  -\epsilon_{y}(t) & \epsilon_{x}(t) & -(\gamma_{0}+\delta \gamma_{t}) \\
\end{array}%
\right) \left(%
\begin{array}{c}
  x_{t} \\
  y_{t} \\
  z_{t} \\
\end{array}%
\right)+\left(%
\begin{array}{c}
  0 \\
  0 \\
  -(\gamma_{0}+\delta \gamma_{t}) \\
\end{array}%
\right),
\end{equation}
where $(x_{0}, y_{0}, z_{0})=(0, 0, 1)$. From (\ref{AD_stateEq}), we have
\begin{equation}
\begin{array}{cc}
\dot{z_{t}}& =-\epsilon_{y}(t)x_{t}+\epsilon_{x}(t)y_{t}-(\gamma_{0}+\delta \gamma_{t})(z_{t}+1)\\
& \geq -2\epsilon \sqrt{1-z_{t}^{2}}-(\gamma_{0}+\gamma)(z_{t}+1).\ \ \ \ \ \ \
\end{array}
\end{equation}
Denoting
\begin{equation}
f(z)=2\epsilon \sqrt{1-z_{t}^{2}}+(\gamma_{0}+\gamma)(1+z_{t}),
\end{equation}
we have
\begin{equation}
\frac{\text{d}f(z)}{\text{d}z}=(\gamma_{0}+\gamma)-2\epsilon\frac{z_{t}}{\sqrt{1-z_{t}^{2}}}.
\end{equation}
Let
$\frac{\text{d}f(z)}{\text{d}z}=0$ to find the solution $z=\frac{\gamma_{0}+\gamma}{\sqrt{4\epsilon^{2}+(\gamma_{0}+\gamma)^{2}}}$.
Hence,
\begin{equation}
\max f(z)=f(\frac{\gamma_{0}+\gamma}{\sqrt{4\epsilon^{2}+(\gamma_{0}+\gamma)^{2}}})
=\sqrt{4\epsilon^{2}+(\gamma_{0}+\gamma)^{2}}+(\gamma_{0}+\gamma).
\end{equation}
Hence,
\begin{equation}
\dot{z_{t}}\geq -\max f(z)=-\sqrt{4\epsilon^{2}+(\gamma_{0}+\gamma)^{2}}-(\gamma_{0}+\gamma).
\end{equation}
When $t\in [0, T_{a}]$ where
\begin{equation}
T_a=\frac{2p_{0}}{\sqrt{4\epsilon^{2}+(\gamma_{0}+\gamma)^{2}}+(\gamma_{0}+\gamma)},
\end{equation}
we have
\begin{equation}
z_{t}\geq 1-(\max f(z))t\geq 1-2p_{0}.
\end{equation}
Therefore, if one makes a measurement on the system with $\sigma_{z}$, the probability of failure $\langle 1 |\rho_{t}|1\rangle=\frac{1-z_{t}}{2}\leq p_{0}$.
\end{proof}

\subsection{Proof of Corollary \ref{AC_improved_proposition1}}
\begin{proof}
When $p_{0}\leq \frac{1}{2}-\frac{\gamma_{0}+\gamma}{2\sqrt{4\epsilon^{2}+(\gamma_{0}+\gamma)^{2}}}$, from the proof of Theorem \ref{AC_Thereom1}, we know for $z\in [1-2p_{0}, 1]$,
\begin{equation}
\max f(z)=f(1-2p_{0})=4\epsilon\sqrt{p_{0}-p_{0}^{2}}+2(\gamma_{0}+\gamma)(1-p_{0}).
\end{equation}
Hence, if $t\in [0, T_{a}']$ where
\begin{equation}
T_{a}'=\frac{2p_{0}}{4\epsilon\sqrt{p_{0}-p_{0}^{2}}+2(\gamma_{0}+\gamma)(1-p_{0})},
\end{equation}
\begin{equation}
z_{t}\geq 1-[4\epsilon \sqrt{p_{0}-p_{0}^{2}}+2(\gamma_{0}+\gamma)(1-p_{0})]t\geq 1-2p_{0}.
\end{equation}
It is clear that the probability of failure $\langle 1 |\rho_{t}|1\rangle\leq p_{0}$.
\end{proof}

\subsection{Proof of Proposition \ref{AC_proposition2}}
\begin{proof}
When $H(t)=I_{z}$ and $\gamma_{t}=\gamma_{0}+\delta\gamma_{t}$, the state equation of the system in (\ref{AD_masterEq}) is
\begin{equation}
\left(%
\begin{array}{c}
  \dot{x}_{t} \\
  \dot{y}_{t} \\
  \dot{z}_{t} \\
\end{array}%
\right)
=\left(%
\begin{array}{ccc}
  -\frac{1}{2}(\gamma_{0}+\delta \gamma_{t}) & -1 & 0 \\
  1 & -\frac{1}{2}(\gamma_{0}+\delta \gamma_{t})  & 0 \\
  0 & 0 & -(\gamma_{0}+\delta \gamma_{t}) \\
\end{array}%
\right) \left(%
\begin{array}{c}
  x_{t} \\
  y_{t} \\
  z_{t} \\
\end{array}%
\right)+\left(%
\begin{array}{c}
  0 \\
  0 \\
  -(\gamma_{0}+\delta \gamma_{t}) \\
\end{array}%
\right),
\end{equation}
where $(x_{0}, y_{0}, z_{0})=(0, 0, 1)$. It is clear that
\begin{equation}\label{AD67}
\dot{z_{t}}=-(\gamma_{0}+\delta\gamma_{t})(1+z_{t})\geq -(\gamma_{0}+\gamma)(1+z_{t}).
\end{equation}
From (\ref{AD67}), we have
\begin{equation}
z_{t}\geq 2e^{-(\gamma_{0}+\gamma)t}-1.
\end{equation}
If $t\in [0, T_{a}'']$ where
\begin{equation}
T_{a}''=-\frac{\ln (1-p_{0})}{\gamma_{0}+\gamma},
\end{equation}
we have
\begin{equation}
z_{t}\geq 1-2p_{0}.
\end{equation}
That is, the probability of failure $\langle 1|\rho_{t}|1\rangle\leq p_{0}$.
\end{proof}

\subsection{Proof of Theorem \ref{AD_Theorem2}}
\begin{proof}
From the proof of Theorem \ref{AC_Thereom1}, we know
$$\dot{z_{t}}\geq -\max f(z)=-\sqrt{4\epsilon^{2}+(\gamma_{0}+\gamma)^{2}}-(\gamma_{0}+\gamma).$$
Now if the initial state $z_{0}\geq 1-2\alpha p_{0}$ and $t\in [0, (1-\beta)T_{a}]$, the system's state satisfies
\begin{equation}
z_{t}\geq z_{0}-(\sqrt{4\epsilon^{2}+(\gamma_{0}+\gamma)^{2}}+(\gamma_{0}+\gamma))(1-\beta) T_{a}\geq 1-2(1+\alpha-\beta)p_{0}.
\end{equation}
When $\alpha\leq \beta$, we have the following relationship
\begin{equation}
z_{t}\geq 1-2p_{0}.
\end{equation}
Hence, the probability of failure satisfies $\langle 1|\rho_{t}|1\rangle\leq p_{0}$.
\end{proof}

\subsection{Proof of Theorem \ref{PD_Theorem1}}
\begin{proof}
For a single qubit subject to (\ref{PD_masterEq})
when $H(t)=[1+\omega(t)] I_{z}+\epsilon_{x}(t)I_{x}+\epsilon_{y}(t)I_{y}$ ($|\omega(t)|\leq \omega$, $\sqrt{\epsilon_{x}^{2}(t)+\epsilon_{y}^{2}(t)}\leq
\epsilon$, $\omega\geq 0$ and $\epsilon >0$), $\gamma_{t}=\gamma_{0}+\delta \gamma_{t}$ ($|\delta \gamma_{t}|\leq \gamma$),
using (\ref{blochEq}), we have
\begin{equation}\label{lemmaBD}
\left(%
\begin{array}{c}
  \dot{x}_{t} \\
  \dot{y}_{t} \\
  \dot{z}_{t} \\
\end{array}%
\right)
=\left(%
\begin{array}{ccc}
  -2(\gamma_{0}+\delta \gamma_{t}) & -(1+\omega(t)) & \epsilon_{y}(t) \\
  1+\omega(t) & -2(\gamma_{0}+\delta \gamma_{t})  & -\epsilon_{x}(t) \\
  -\epsilon_{y}(t) & \epsilon_{x}(t) & 0 \\
\end{array}%
\right) \left(%
\begin{array}{c}
  x_{t} \\
  y_{t} \\
  z_{t} \\
\end{array}%
\right),
\end{equation}
where $C_{0}=x_{0}^{2}+y_{0}^{2}=1$.
Let $C_{t}=x_{t}^{2}+y_{t}^{2}$. We have
\begin{equation}\label{eq75}
\begin{array}{cc}
\dot{C_{t}}& =2x_{t}\dot{x}_{t}+2y_{t}\dot{y}_{t}=-4(\gamma_{0}+\delta\gamma_{t})(x_{t}^{2}+y_{t}^{2})+2z_{t}(\epsilon_{y}(t)x_{t}-\epsilon_{x}(t)y_{t})\\
& \geq -4(\gamma_{0}+\delta\gamma_{t})(x_{t}^{2}+y_{t}^{2})-2\epsilon \sqrt{1-(x_{t}^{2}+y_{t}^{2})}(|x_{t}|+|y_{t}|)\\
& \geq -4(\gamma_{0}+\delta\gamma_{t})(x_{t}^{2}+y_{t}^{2})-2\epsilon \sqrt{1-(x_{t}^{2}+y_{t}^{2})}\sqrt{2(x_{t}^{2}+y_{t}^{2})}
\end{array}
\end{equation}

Let $N_{t}=2(\gamma_{0}+\gamma)^{2}C_{t}^{2}-\epsilon^{2}C_{t}^{2}+\epsilon^{2}C_{t}$. We have
\begin{equation}\label{eq76}
2(\gamma_{0}+\delta\gamma_{t})C_{t}+\epsilon \sqrt{2C_{t}(1-C_{t})}
\leq \sqrt{2}\sqrt{4(\gamma_{0}+\delta\gamma_{t})^{2}C_{t}^{2}+2\epsilon^{2}C_{t}(1-C_{t})}
\leq2\sqrt{N_{t}}
\end{equation}

Hence,
$
\dot{C_{t}}\geq -4\sqrt{\max N_{t}}
$. According to the definition of $N_{t}$, it is easy to verify the fact
\begin{equation}\label{maxN}
\max N_{t}=\begin{cases}2(\gamma_{0}+\gamma)^{2}, \  \ \ \text{when} \ 4(\gamma_{0}+\gamma)^{2}\geq \epsilon^{2};\\
\frac{\epsilon^{4}}{4\epsilon^{2}-8(\gamma_{0}+\gamma)^{2}}, \  \ \ \text{when} \ 4(\gamma_{0}+\gamma)^{2}< \epsilon^{2}.
\end{cases}
\end{equation}
If $t\in [0, T_{p}]$ where
\begin{equation}
T_{p}=\begin{cases}\frac{1-\bar{C}}{4\sqrt{2}(\gamma_{0}+\gamma)}, \  \ \ \text{when} \ 4(\gamma_{0}+\gamma)^{2}\geq \epsilon^{2};\\
\frac{(1-\bar{C})\sqrt{\epsilon^{2}-2(\gamma_{0}+\gamma)^{2}}}{2\epsilon^{2}}, \  \ \ \text{when} \ 4(\gamma_{0}+\gamma)^{2}< \epsilon^{2},
\end{cases}
\end{equation}
we have
$C_{t}\geq \bar{C}$.

\end{proof}

\subsection{Proof of Corollary \ref{PD_proposition1}}
\begin{proof}
When $\epsilon^{2}=2(\gamma_{0}+\gamma)^{2}$, from (\ref{eq75}) and (\ref{eq76}), we have
\begin{equation}
\dot{C_{t}}\geq -4\sqrt{\epsilon^{2}C_{t}}.
\end{equation}
It is easy to obtain the following relationship
\begin{equation}
2\text{d}\sqrt{C_{t}}\geq -4\epsilon \text{d}t,
\end{equation}
\begin{equation}
\sqrt{C_{t}}\geq 1-2\epsilon t.
\end{equation}
If $t\in [0, T_{p}']$ where
\begin{equation}
T_{p}'=\frac{1-\sqrt{\bar{C}}}{2\sqrt{2}(\gamma_{0}+\gamma)},
\end{equation}
we have
$C_{t}\geq \bar{C}.$
\end{proof}

\subsection{Proof of Proposition \ref{PD_proposition2}}
\begin{proof}
When $H(t)=I_{z}$ and $\gamma_{t}=\gamma_{0}+\delta\gamma_{t}$, using (\ref{blochEq}) and (\ref{PD_masterEq}), we have
\begin{equation}\label{lemmaBD2}
\left(%
\begin{array}{c}
  \dot{x}_{t} \\
  \dot{y}_{t} \\
  \dot{z}_{t} \\
\end{array}%
\right)
=\left(%
\begin{array}{ccc}
  -2(\gamma_{0}+\delta \gamma_{t}) & -1 & 0 \\
  1 & -2(\gamma_{0}+\delta \gamma_{t})  & 0 \\
  0 & 0 & 0 \\
\end{array}%
\right) \left(%
\begin{array}{c}
  x_{t} \\
  y_{t} \\
  z_{t} \\
\end{array}%
\right),
\end{equation}
where $C_{0}=x_{0}^{2}+y_{0}^{2}=1$. It is clear that
\begin{equation}
\dot{C_{t}}=2x_{t}\dot{x}_{t}+2y_{t}\dot{y}_{t}=-4(\gamma_{0}+\delta\gamma_{t})(x_{t}^{2}+y_{t}^{2})\geq -4(\gamma_{0}+\gamma)C_{t}.
\end{equation}
Hence,
\begin{equation}
C_{t}\geq e^{-4(\gamma_{0}+\gamma)t}.
\end{equation}
If $t\in [0, T_{p}'']$ where
\begin{equation}
T_{p}''=-\frac{\ln \bar{C}}{4(\gamma_{0}+\gamma)},
\end{equation}
we have
$C_{t}\geq \bar{C}$.
\end{proof}

\subsection{Proof of Theorem \ref{DD_Theorem}}
\begin{proof}
For a single qubit system subject to (\ref{DD_masterEq}),
when $H(t)=[1+\omega(t)] I_{z}+\epsilon_{x}(t)I_{x}+\epsilon_{y}(t)I_{y}$ ($\sqrt{\epsilon_{x}^{2}(t)+\epsilon_{y}^{2}(t)}\leq
\epsilon$, $\epsilon >0$, $|\omega(t)|\leq \omega$ and $\omega\geq 0$), $\gamma_{t}=\gamma_{0}+\delta \gamma_{t}$ ($|\delta \gamma_{t}|\leq \gamma$),
using (\ref{blochEq}), we have
\begin{equation}
\left(%
\begin{array}{c}
  \dot{x}_{t} \\
  \dot{y}_{t} \\
  \dot{z}_{t} \\
\end{array}%
\right)
=\left(%
\begin{array}{ccc}
  -4(\gamma_{0}+\delta \gamma_{t}) & -1 & 0 \\
  1 & -4(\gamma_{0}+\delta \gamma_{t})  & 0 \\
  0 & 0 & -4(\gamma_{0}+\delta \gamma_{t}) \\
\end{array}%
\right) \left(%
\begin{array}{c}
  x_{t} \\
  y_{t} \\
  z_{t} \\
\end{array}%
\right),
\end{equation}
where $P_{0}=x_{0}^{2}+y_{0}^{2}+z_{0}^{2}=1$ and $R_{t}=\text{tr}(\rho_{t}^{2})=x_{t}^{2}+y_{t}^{2}+z_{t}^{2}.$ It is clear that
\begin{equation}
\dot{R_{t}}=-8(\gamma_{0}+\delta \gamma_{t})P_{t}\geq -8(\gamma_{0}+\gamma)R_{t}.
\end{equation}
Hence,
\begin{equation}
R_{t}\geq e^{-8(\gamma_{0}+\gamma)t}
\end{equation}
If $t\in [0, T_{d}]$ where
\begin{equation}
T_{d}=-\frac{\ln (2\bar{P}-1)}{8(\gamma+\gamma_{0})},
\end{equation}
we have
$P_{t}\geq \bar{P}.$
\end{proof}

\section{CONCLUSIONS}\label{Sec5}
Control design for quantum systems with uncertainties is an important task. This paper has proposed a sampled-data design approach for a single qubit with uncertainties. Both closed and Markovian open quantum systems are investigated, and uncertainties in the
system Hamiltonian and uncertainties in the coupling strength of the system-environment interaction are analyzed. Several physically meaningful performance indices including fidelity, coherence and purity are used to define the required robustness and several sufficient
conditions on the relationships between related parameters in the control system are
established to guarantee such
robustness. The robust control law can be designed offline and then be used online on the single qubit system with uncertainties. Future work will include the extension of these sampled-data control approaches to other finite dimensional quantum systems and the development of practical applications of the proposed method.






\end{document}